\documentclass[11pt,reqno]{amsart}
\usepackage[a4paper]{geometry}
\title{The asymmetric five vertex model on a rectangle}
\date{\today}
\author{Jan de Gier}
\address{School of Mathematics and Statistics, University of Melbourne, Victoria 3010, Australia}
\email{\href{mailto:jdgier@unimelb.edu.au}{jdgier@unimelb.edu.au},\href{mailto:wheelerm@unimelb.edu.au}{wheelerm@unimelb.edu.au},\href{mailto:yuyzhou1@student.unimelb.edu.au}{yuyzhou1@student.unimelb.edu.au}}

\author{Richard Kenyon} 
\address{Department of Mathematics, Yale University, New Haven CT 06511}
\email{\href{mailto:richard.kenyon@yale.edu}{richard.kenyon@yale.edu}}

\author{Michael Wheeler} 

\author{Yuyang Zhou}

\usepackage{preamble}
\usepackage{longtikzs}

\newcommand{\old}[1]{}

\begin{document}
\pagenumbering{arabic}

\begin{abstract}
We derive a determinantal expression for the inhomogeneous asymmetric five vertex model in a rectangular geometry with arbitrary boundary conditions at the bottom and top. Standard non-intersecting lattice path, or free fermion, approaches are not applicable and the determinantal form thus is not immediate. 
\end{abstract}

\maketitle

\section{Introduction}

The five-vertex model is a model of non-intersecting lattice paths on the directed square lattice. It arises as a special case of the solvable six vertex model \cite{Lieb_1967,SYY_1967,baxter1985exactly} with local vertex configurations listed in Figure~\ref{fig:five-weights-intro}. 

\begin{figure}[h!]
\label{fig:five-weights-intro}
\fiveWeightsIntro
\caption{Configurations and special weights of the homogeneous asymmetric five vertex model. The intersecting path vertex has weight $0$.}
\end{figure}

The \emph{symmetric} five vertex model, when $r=1$, has received much attention in both mathematics and physics literature. In mathematics this case is known as the monotone non-intersecting lattice path (MNLP) model, or equivalently, the honeycomb dimer model. The partition function in this case may be computed as a determinant using the Karlin-MacGregor-Lindstr\"om-Gessel-Viennot method \cite{KM_59,Lin_1973,GV_1989} and the Kasteleyn method \cite{Kasteleyn_67}. (These methods are not available for general values of $r$.) The symmetric model is used to analyse the asymptotics of the Plancherel measure and limit shapes and enumeration of boxed plane partitions \cite{VerKer77,Cohn1998,bogoliubov2005boxed, pronko2016five}, and the correspondence of the inhomogeneous 
$r=1$ five vertex model with Schur polynomials is the basis for the Schur process \cite{OkRe2003} and the combinatorics of Littlewood-Richardson coefficients \cite{zinn2009six}.

In physics the symmetric model has been used for modeling of modified potassium dihydrogen phosphate (KDP) \cite{PhysRev.168.539}, the roughening transition \cite{Blote1982RougheningTA}, crystal growth and evaporation in two dimensions \cite{garrod1990stochastic, garrod1990mapping, Huang1996interacting}, and interacting domain wall systems \cite{ noh1994interacting}. 

The \emph{asymmetric} five vertex model ($r\ne1$) has received recent attention in combinatorics due to its connection to Grothendieck polynomials, Demazure atoms and crystal theory
\cite{motegi2013vertex,motegi2014k,WheelerZinnJustin,iwao2020grothendieck, motegi2020integrability,brubaker2021colored, buciumas2022double, gunna2023vertex}. It can be realised as the $q=0$ limit of the (stochastic) six vertex model, and therefore the totally asymmetric exclusion process (TASEP), see e.g. \cite{motegi2013vertex} and references therein.

For the thermodynamic limit (that is, on the whole plane $\Zbb^2$), the phase diagram of the five vertex model has been completely determined via the analysis of the largest eigenvalue of the transfer matrix, originally through non-rigorous Bethe ansatz methods \cite{gulacsi1993phase,noh1994interacting} and more recently rigorously in \cite{de2021limit}. It is well known that boundary conditions from finite geometries can change thermodynamic properties of the six-vertex model, and hence also the five-vertex model. The most well known example is the six-vertex model with domain wall boundary conditions for which the partition function can be computed as a determinant \cite{korepin1982calculation,izergin1987partition,Izergin_1992,Kuperberg_2002}. In this case an asymptotic analysis can be performed \cite{korepinZinn2000,bleher2006,bleher2009}, resulting in geometry dependence in the free energy, or surface tension, due to boundary conditions.

Recently, notable progress has been achieved in \cite{de2021limit, kenyon2022gradient, kenyon2023genus, kenyon2024limit} in understanding scaling properties of the five-vertex model in a general setup by variational methods, generalising 
\cite{Cohn2000,KenOk,colomo2010arctic,colomo2016arctic}, with a focus on phase separation and limit shapes. 

For the special case of ``scalar product'' boundary conditions, the partition function of the asymmetric five vertex model has been shown to be expressable as a determinant. Its combinatorial and asymptotic properties have been extensively studied recently \cite{ bogolyubov10,bogoliubov2015integrable,  burenev2021determinant,burenev2024thermodynamics, bogoliubov2024scalar, colomo2025five}. 

In this paper we show that the partition function of an inhomogeneous five vertex model on a rectangular geometry, with general boundary conditions on the bottom and top, is expressible as a determinant: see \autoref{thm:determinantal-formula-first-chapter} below. Other key results include a partition function interpretation for inhomogeneous generalisations of Grothendieck polynomials, determinant expressions for these, an orthogonality relation between them, and a Cauchy identity for the transfer matrix. 

\section{The model}

We consider the $M\times N$ grid, with vertices $[M]\times[N]=\{1,\dots,M\}\times\{1,\dots,N\}$. Our configurations consist in $k$ non-intersecting lattice paths which enter from the bottom and exit from the top, flowing along up-right directions (i.e.\ the directions of the grid). Denote the horizontal positions where the paths enter by $y=(0<\chain{y}{k}{<}\leq N)$ and likewise the exit positions by $x$. Due to the up-right nature of the paths we must have $y_i\leq x_i$ for all $i\in [N]$, and denote this relation by $y\le x$. There is a spectral parameter associated with each vertical and horizontal line as labelled in Figure \ref{fig:PF-G-BC}. 

\begin{figure}[h]
    \centering
    \partitionFunctionG{x}{y}
    \caption{Square lattice boundary conditions. At the left and right boundaries the zeros denote unoccupied edges. The top and bottom boundaries are determined by $x$ and $y$ respectively. The inside of the grid is filled with a particular possible configuration.}
    \label{fig:PF-G-BC}
\end{figure}

We define the following weights for each of the local configurations, setting $\z=u/v$,
\begin{equation}\label{eq:five-weights-L-graph}
    \fiveWeights
\end{equation}
where $r>0$ and $\alpha>0$ are global parameters and
\begin{equation}\label{eq:5-weights}
	w_1(\z)=\frac{\alpha(\z-r^2)}{1-r^2}\,,\quad w_3(\z)=\frac{\z}{\alpha}\,,\quad w_4(\z)=\alpha \z,\quad w_5(\z)=w_6(\z)=r\sqrt{\z}\,.
\end{equation}

We will refer to these local configurations as type-$1$ to type-$6$ vertices. Algebraically, we associate to each edge of $\Zbb^2$ a vector space $\Cbb^2$ with basis $\{e_0,e_1\}$; the basis vector 
$e_0$ denotes the absence of an edge in the configuration and $e_1$ the presence of an edge in the configuration.
We can then view each vertex as an operator $L$ on the vector space $\Cbb^2\otimes\Cbb^2$, with as input the tensor product of the edge vector spaces to the left and below that vertex, and with output the 
tensor product of the edge vector spaces above and to the right of that vertex.
Written in matrix form, with the tensor product basis $e_0\otimes e_0, e_0\otimes e_1, e_1\otimes e_0, e_1\otimes e_1$, the operator is
\begin{equation}
\label{eq:five-weights-L-matrix}
		L_{ab}(u/v)=\WOperator=\frac{1}{w_1(u/v)}\left(\begin{matrix}
		w_1(u/v)&0&0&0\\
		0&w_3(u/v)&w_5(u/v)&0\\
		0&w_6(u/v)&w_4(u/v)&0\\
		0&0&0&0
	\end{matrix}\right)_{ab}
\end{equation}

Note that since we require the paths to be non-intersecting, the weight $w_2$ is set to zero. The weights are normalised by setting $w_1=1$, the weight of the empty vertex. This normalisation allows us to add empty columns to the grid indefinitely on both the left and right.

We are interested in computing the partition function of the model with general boundary conditions $y\le x$ 
as indicated in Figure \ref{fig:PF-G-BC}. Define
\begin{equation}
    G_{x/y}(U|V)=\sum_\mathsf{C}\prod_{i=1}^M\prod_{j=1}^Nw_{i,j}^{\mathsf{C}}(u_j/v_i)\, ,
\end{equation}
where $U$ and $V$ denote the sets of spectral parameters $(\chain{u}{N}{,})$ and $(\chain{v}{M}{,})$. The sum is over all configurations $\mathsf{C}$ that admit the boundary condition $x,y$. For a particular configuration $\mathsf{C}$, its local weights $w_{i,j}^{\mathsf{C}}(u_j/v_i)$ are one of the five non-zero vertices in \eqref{eq:5-weights}. The positions $(i,j)$ are indexed in correspondence with the spectral parameters and, due to the normalisation \eqref{eq:five-weights-L-graph}, we assume that $u_i/v_j\neq r^2$ for all $i=1,\ldots,N$ and $j=1,\ldots,M$. 
\begin{theorem}\label{thm:determinantal-formula-first-chapter}
    We have 
    \begin{equation}\label{eq:det-G}
        G_{x/y}(U|V)=\det(Q_{i,j}(x_i,y_j))_{1\leq i,j\leq k}
    \end{equation} where 
    \begin{equation}
        Q_{i,j}(x_i,y_j)=\sqrt{\frac{u_{y_j}}{u_{x_i}}} \oint_\CC \D z\, z^{i-j}\frac{r^2}{r^2z-u_{y_j}}\prod_{n=1}^{x_i}\frac{u_n(r^2-1)}{r^2z-u_n}\prod_{n'=1}^{y_j}\frac{r^2z-u_{n'}}{u_{n'}(r^2-1)}\prod_{m=1}^M\left[\frac{1}{\alpha^2}\frac{1-r^2}{1-v_m/z}\right]
    \end{equation}
    where the contour $\CC$ encloses only the poles at $z=u_i/r^2$ $(i=1,\ldots,N)$ and $k$ is the common length of $x$ and $y$. 
\end{theorem}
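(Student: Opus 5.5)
We would prove \autoref{thm:determinantal-formula-first-chapter} by integrability, treating one row at a time. Let $T(v)$ be the row transfer matrix: the ordered product of $L(u_j/v)$, $j=1,\dots,N$, acting on the quantum space $(\Cbb^2)^{\otimes N}$, with empty auxiliary edges at both ends. Then $G_{x/y}(U|V)=\langle x|T(v_M)\cdots T(v_1)|y\rangle$. The five-vertex $L$ satisfies a Yang--Baxter equation, so the $T(v)$ commute, and $G_{x/y}$ is symmetric in $V$.

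The key step is a single-row formula. With the $w_1$ normalisation, a row with $k$ paths is determined by interlacing data $y\le x'$ with $x'_i<y_{i+1}$. Its weight is a product of local factors: $w_4/w_1$ for each horizontal step, $w_5w_6/w_1^2$ for each turn pair, and $w_3/w_1$ for vertical passes. We would show that this one-row weight equals $\det(Q^{(1)}_{i,j})$, where $Q^{(1)}$ is the contour integral of the theorem with $M=1$. The single factor $\frac{1}{\alpha^2}\frac{1-r^2}{1-v/z}$ generates, through its expansion in $v/z$ and the residues at $z=u_n/r^2$, exactly the one-path weights. The factor $z^{i-j}$ together with $\frac{r^2}{r^2z-u_{y_j}}$ supplies the off-diagonal entries. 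For $j>i+1$ these entries vanish because the integrand is then regular at infinity and has no relevant poles; for $j=i+1$ they give the correction that enforces $x'_i<y_{i+1}$, in the spirit of a "shifted" Lindstr\"om--Gessel--Viennot argument. So the one-row matrix is upper-Hessenberg-like, and its determinant reproduces the non-intersecting constraint with the correct weights.

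Next comes the multiplicativity step. We want
\[
\sum_{x'} \det Q^{(1)}(x,x';v_M)\,\det Q^{(M-1)}(x',y;v_1,\dots,v_{M-1})=\det Q^{(M)}(x,y).
\]
We would prove this with a Cauchy--Binet type identity: the sum over $x'$ runs over strictly increasing $k$-tuples, and the entries depend on $x'_\ell$ through the products $\prod_{n\le x'_\ell}$. The convolution identity $\sum_{m} Q^{(1)}_{i,\ell}(x_i,m)\,Q^{(M-1)}_{\ell,j}(m,y_j)=Q^{(M)}_{i,j}(x_i,y_j)$ follows by summing a geometric-type telescoping series in $m$ inside the double contour integral. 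Here $\prod_{n\le m}\frac{u_n(r^2-1)}{r^2z-u_n}\prod_{n\le m}\frac{r^2w-u_n}{u_n(r^2-1)}$ telescopes into a difference divided by $(z-w)$, and the extra factor $\frac{r^2}{r^2z-u_{y_j}}$ is precisely what makes the telescoping exact. The $z^{i-\ell}w^{\ell-j}$ powers combine to $z^{i-j}$ after collapsing the $w$ integral at $w=z$.

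Finally, Cauchy--Binet over unrestricted tuples $m_1,\dots,m_k$ reduces to the sum over strictly increasing $x'$, because antisymmetry kills repeated entries. The $M=1$ case then gives the theorem by induction. The prefactors $\sqrt{u_{y_j}/u_{x_i}}$ are handled by conjugation: they telescope across the rows and factor out of the determinant as $\prod_j\sqrt{u_{y_j}/u_{x_j}}$.

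We expect the main obstacle to be the Cauchy--Binet step. The single-row matrices are not of pure LGV form, since the index shift $z^{i-j}$ couples the row and column labels, so the ordered sum over $x'$ does not immediately become an unrestricted sum. We would first check that terms with $x'_\ell=x'_{\ell+1}$, or with the wrong order, cancel in pairs, using the Hessenberg structure. If that fails, we would instead prove the formula by verifying that both sides satisfy the same exchange relations: we would use the Yang--Baxter equation to show that the right side satisfies the recursion obtained by removing the rightmost column $N$, and then check the base cases.
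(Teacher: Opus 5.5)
\textbf{The gap is in the multiplicativity step.} Your convolution identity is correct. Put the $w$-contour inside the $z$-contour; the telescoped sum over $m$ then leaves only the residue at $z=w$. This gives $\sum_m Q^{(1)}_{i\ell}(x_i,m)\,Q^{(M-1)}_{\ell j}(m,y_j)=Q^{(M)}_{ij}(x_i,y_j)$, independently of $\ell$. But it only yields the \emph{unrestricted} Cauchy--Binet sum
\[
\sum_{(m_1,\dots,m_k)}\det\bigl[Q^{(1)}_{i\ell}(x_i,m_\ell)\bigr]\,\det\bigl[Q^{(M-1)}_{\ell j}(m_\ell,y_j)\bigr]=k!\,\det Q^{(M)}(x,y).
\]
Your justification for passing to strictly increasing $x'$ (``antisymmetry kills repeated entries'') is false. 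The index $\ell$ enters through $z^{i-\ell}$ and $w^{\ell-j}$, not only through $m_\ell$. So the summand is neither symmetric in $(m_1,\dots,m_k)$ nor zero when two $m$'s coincide.

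Concretely, take the homogeneous limit with $k=2$, $M=2$, $x=(2,3)$, $y=(1,2)$ and $s=r^2$. Let $S(m_1,m_2)$ be the product of the two determinants with $x'=(m_1,m_2)$. Then:
\begin{itemize}
\item[(i)] $S(1,3)=s^2$ but $S(3,1)=1$;
\item[(ii)] $S(2,2)=(1-s)^2\neq0$, since already $\det[Q^{(1)}_{i\ell}(x_i,2)]=1-s$.
\end{itemize}
The ordered sum is $s^2=G_{x/y}$, while the unrestricted sum is $2s^2$. These discrepancies disappear only at $r=1$, which is the Lindstr\"om--Gessel--Viennot case. So the identity you need is true, but it is a genuine Cauchy-type identity for the Grothendieck-like functions, not a consequence of antisymmetry. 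It is precisely the non-free-fermionic content of the theorem, and your fallback (an unspecified recursion in $N$) does not supply it.

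There is also a secondary gap in the base case. The claim that $Q^{(1)}_{ij}=0$ for $j>i+1$ holds only for interlacing data. In the homogeneous limit with $k=3$, $x=(3,4,5)$, $x'=(1,2,3)$, the $(1,3)$ entry equals $1$. So the vanishing of $\det Q^{(1)}(x,x')$ on non-interlacing pairs, which your sum over all increasing $x'$ requires, is not established.

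\textbf{What the paper does instead.} The paper never composes determinants; it performs the sum over intermediate positions inside the lattice.
\begin{itemize}
\item[(a)] The row exchange of \autoref{thm:row-exchange}, built from the Yang--Baxter equation with the renormalised $L'$ weights, gives the eigenvector relations of \autoref{thm:branching-rule}.
\item[(b)] The explicit formulas for $F$ and $\widetilde{F}$ (\autoref{thm:formula-F}) and their orthogonality (\autoref{thm:orthogonality}) turn $\sum_w F_wG_{w/y}=\Lambda F_y$ into the symmetric $k$-fold integral of \autoref{thm:G-formula-integral}.
\item[(c)] That integral is expanded into $\det Q$ by an Andr\'eief-type argument.
\end{itemize}

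If you want to keep an induction on $M$, the repair is essentially this:
\begin{itemize}
\item[(1)] Take $M=0$ as the base case. There $\det Q^{(0)}(x,y)=\delta_{x,y}$ is exactly the orthogonality relation.
\item[(2)] Write $\det Q^{(M-1)}(x',y)$ as the symmetric integral of $C_k\rho\,\Delta^2\,\widetilde{F}_{x'}F_y\Lambda$.
\item[(3)] Carry out the sum over $x'$ using \eqref{eq:branching-rule-T}, which lets the one-row transfer matrix act on $\widetilde{F}_{x'}(Z)$, rather than using Cauchy--Binet.
\end{itemize}
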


\begin{corollary}
	With the following homogeneous weights 
	\begin{equation}
		\weights{}{}{1}{0}{1}{1}{r}{r}{0}{red}
	\end{equation} where all spectral parameters $u_i,v_j\mapsto1$, and using a change of integration variable $w=(r^2-1)/(r^2z-1)$, we have
\begin{equation}\label{eq:homogeneous-limit-determinant-first-chapter}
        G_{x/y}^{\rm hom}= \det (B_{i,j})_{1\le i,j \le k}\,,
    \end{equation}
    where
    \begin{equation}\label{eq:homlimit-stated-in-ch-2}
		B_{i,j} = \oint_{\CC'}\frac{\D w}{2\pi\I w} \left(\frac{w+r^2-1}{w}\right)^{i-j}w^{x_i-y_j}\left(\frac{w+r^2-1}{w-1}\right)^M
	\end{equation}
    where the contour is defined by $\CC'=\{w_i:|w_i|=c\}$ with $c>\max\{1,|a|\}$. 
\end{corollary}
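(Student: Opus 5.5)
This is a plan for the Corollary, which I would obtain by specialising Theorem~\ref{thm:determinantal-formula-first-chapter}.

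\textbf{Step 1: specialise the weights.} Setting $u_n=v_m=1$ in \eqref{eq:5-weights} gives $\z=1$. After dividing by $w_1(1)=\alpha$, the normalised weights are
\[
w_3/w_1=\frac{1}{\alpha^2},\qquad w_4/w_1=1,\qquad w_5/w_1=w_6/w_1=\frac{r}{\alpha}.
\]
These are not exactly the displayed homogeneous weights, so one extra gauge step is needed.

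\textbf{Step 2: remove the dependence on $\alpha$.} In any configuration the number of type-$3$ vertices (path travelling vertically) plus type-$5$/type-$6$ turning vertices is fixed by the boundary data. Each path enters at the bottom and exits at the top, so it crosses all $M$ rows, and in each row it uses either one vertical-through vertex or a pair of turns. The factor $\alpha^{-2}$ attached to each row crossing is therefore global: it contributes $\alpha^{-2kM}$ overall. In the contour formula this matches the factor $\alpha^{-2M}$ inside each $Q_{i,j}$, which multiplies the $k\times k$ determinant by $\alpha^{-2Mk}$. Consequently $\alpha$ can be set to $1$ on both sides. I will check the turn-pair bookkeeping once, namely that a row crossing by a pair of turns gives $w_5w_6=r^2\z$ against a vertical crossing giving $w_3=\z/\alpha$. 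Assuming this is consistent, $G^{\rm hom}$ is simply $Q$ with $\alpha=1$, up to the stated weights.

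\textbf{Step 3: evaluate the homogeneous kernel.} With all spectral parameters equal to $1$, the prefactor $\sqrt{u_{y_j}/u_{x_i}}$ is $1$, and the integrand becomes
\[
z^{i-j}\,\frac{r^2}{r^2z-1}\left(\frac{r^2-1}{r^2z-1}\right)^{x_i-y_j}\left(\frac{1-r^2}{1-1/z}\right)^M .
\]
The substitution is $w=(r^2-1)/(r^2z-1)$. This gives $r^2z=1+(r^2-1)/w$, so $z=(w+r^2-1)/(r^2w)$, and
\[
\D z=-\frac{r^2-1}{r^2w^2}\,\D w,\qquad \frac{r^2}{r^2z-1}\,\D z=-\frac{\D w}{w}.
\]
Further, $1-1/z=(w-1)(r^2-1)/(w+r^2-1)$, hence
\[
\frac{1-r^2}{1-1/z}=-\frac{w+r^2-1}{w-1}.
\]
Finally, $z^{i-j}=r^{-2(i-j)}\big((w+r^2-1)/w\big)^{i-j}$.

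\textbf{Step 4: discard the spurious constants.} The factor $r^{-2(i-j)}$ and the sign $(-1)^{M}$ from each row are removed by conjugating the matrix by $\mathrm{diag}(r^{-2i})$ and pulling out a global sign. The sign $(-1)^{kM}$ is expected to be absorbed by the contour orientation together with the $(-1)$ from $\D z$, given the $1/2\pi\I$ normalisation implicit in $\oint$. This yields $B_{i,j}$ as stated.

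\textbf{Main obstacle: the contour.} The pole of the original integrand at $z=u/r^2=1/r^2$ maps to $w=\infty$. The poles at $z=0$ and $z=1$ (the $v$-poles) map to $w=1-r^2=a$ and $w=1$, respectively. "Enclosing only $z=1/r^2$" therefore corresponds, after inversion, to a large circle in $w$ taken with the complementary orientation, i.e.\ $|w|=c>\max\{1,|a|\}$. This accounts for the stated $\CC'$ and fixes the overall sign. The careful part is matching orientation and signs so that the residue at infinity is reproduced correctly.
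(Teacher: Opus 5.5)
Your route is the same as the paper's: specialise Theorem~\ref{thm:determinantal-formula-first-chapter} to $u_n=v_m=1$, substitute $w=(r^2-1)/(r^2z-1)$, and remove $r^{-2(i-j)}$ by conjugating with $\mathrm{diag}(r^{-2i})$. Step~2 is harmless but unnecessary, since the theorem holds for every $\alpha>0$ and you may simply set $\alpha=1$, as the paper does. If you keep it, state it with the $w_1$-normalised weights: each of the $kM$ row crossings uses either one vertical vertex or one turn pair, and each is worth $\alpha^{-2}$. Do not compare the unnormalised $w_5w_6$ with $w_3$. The genuine problem is the sign bookkeeping in Steps~3--4.

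You wrote $1-1/z=(w-1)(r^2-1)/(w+r^2-1)$. In fact
\[
1-\frac1z=1-\frac{r^2w}{w+r^2-1}=\frac{(r^2-1)(1-w)}{w+r^2-1},
\qquad\text{hence}\qquad
\frac{1-r^2}{1-1/z}=\frac{w+r^2-1}{w-1},
\]
with no minus sign. As a check, both sides equal $1$ at $z=1/r^2$, i.e.\ at $w=\infty$.

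Your version puts $(-1)^M$ in every entry, hence $(-1)^{kM}$ in the determinant. The Step~4 claim that this is absorbed by the contour orientation together with the Jacobian cannot hold. Since $w=\frac{r^2-1}{r^2}(z-1/r^2)^{-1}$, a small positively oriented circle around $z=1/r^2$ maps to $|w|=c$ traversed \emph{clockwise}. That single, $M$-independent sign per entry is exactly cancelled by the $-\,\mathrm{d}w/w$ coming from $r^2\,\mathrm{d}z/(r^2z-1)$. Nothing is left to cancel $(-1)^{kM}$, so your argument as written gives $G^{\mathrm{hom}}=(-1)^{kM}\det B$.

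That is false whenever $kM$ is odd. For $x=y$ all paths are vertical, so $G^{\mathrm{hom}}=1$, while $B$ is lower triangular with unit diagonal, so $\det B=1$. Correcting the algebra removes the spurious sign, and the proof then closes exactly as in the paper. A minor point: $z=0$ maps to $w=1-r^2=-a$, not $a$. This is harmless because only $|a|$ enters the choice of $\mathcal{C}'$.
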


As shown in \autoref{se:homlim}, on the diagonal $i=j$ we find that
\begin{align}
	\nonumber B_{i,i} =\sum_{l\ge0}\binom{M}{l+1}\binom{x_i-y_i-1}{l}r^{2l+2}
\end{align}
and this expression for $x_i\ge y_i$ is exactly the sum of up-right paths from $y_i$ to $x_i$ with lattice height $M$, weighted $r$ per corner. This is no longer the case when $i\neq j$ and hence the usual Karlin-MacGregor-Lindstr\"om-Gessel-Viennot method for non-intersecting lattice paths does not apply. The latter is recovered in the limit $r\to 1$ in which \eqref{eq:homlimit-stated-in-ch-2} reduces to a binomial coefficient. 

The remainder of this paper is devoted to the proof of 
Theorem \ref{thm:determinantal-formula-first-chapter} using the methods of integrable vertex models. 

\subsection{The Yang--Baxter equation}
The above set of weights (\ref{eq:five-weights-L-graph}) admit a Yang--Baxter equation, in the following sense. Define the following $R$ matrix 
(for vertices drawn with a white dot to distinguish from those of (\ref{eq:five-weights-L-graph}))
\begin{equation}\label{eq:five-weights-R-matrix}
    R_{ab}(u/v)=\tikz[scale=1.0, line cap=round, baseline=-4pt]{
        \draw[->] (-0.5,0) -- (0.5,0);
        \draw[->] (0,-0.5) -- (0,0.5);
        \node[lgray] at (-.7,0.2){\scriptsize $v$};
        \node[lgray] at (-0.2,-.7){\scriptsize $u$};
        \node[] at (-.7,0){\scriptsize $a$};
        \node[] at (0,-.7){\scriptsize $b$};
        \filldraw[white] (0,0) circle (2pt);
    	\draw[] (0,0) circle (2pt);}=\begin{pmatrix}
    	    1&0&0&0\\0&0&\sqrt{u/v}&0\\0&\sqrt{u/v}&\frac{\alpha^2}{f(u/v)}&0\\0&0&0&1
    	\end{pmatrix}_{ab}
\end{equation}
where 
\begin{equation}
	f(\z)=\frac{1-r^2}{1-\z}\,.
\end{equation}
Rotating vertices from (\ref{eq:five-weights-R-matrix}) by $-45^\circ$, this $R$ matrix corresponds to the vertex weights
\begin{equation}\label{eq:five-weights-R-graph}
    \fiveWeightsR
\end{equation}

The $L$ weights in \eqref{eq:five-weights-L-matrix} and these $R$ weights \eqref{eq:five-weights-R-matrix} satisfy a Yang--Baxter equation 
\begin{equation}\label{eq:RLL}
    \RLL
\end{equation} which is (conventionally) written as
\begin{equation}\label{eq:RLL-alg}
    R_{ab}(z/v)L_{ac}(u/v)L_{bc}(u/z)=L_{bc}(u/z)L_{ac}(u/v)R_{ab}(z/v),
\end{equation}
where the labels refer to factors of the tensor product $V_a\otimes V_b\otimes V_c$ with each $V\simeq \mathbb{C}^2$.

The $L$ weights in \eqref{eq:five-weights-L-matrix} are a re-parametrisation of the weights used in \cite{burenev2021determinant} for solving the five-vertex model for special boundary conditions \cite{bogolyubov10} using the quantum inverse scattering method \cite{korepin1997quantum}. 

The five vertex model is closely connected to the stochastic six-vertex model underlying the asymmetric exclusion process (ASEP), via a series of transformations and specialisations that preserve the Yang--Baxter equation and leave the partition function invariant up to a multiplicative factor. The details of this correspondence are shown in \autoref{chp:App-A}.

\section{The partition function}
This section is devoted to the proof of \autoref{thm:determinantal-formula-first-chapter}. 
\subsection{Auxiliary partition functions and Cauchy identities}
We first define a set of ``black dotted'' vertices (drawn with a black dot) that are the same as in \eqref{eq:5-weights} but with a different normalisation:
\begin{equation}
    L'_{ab}(u/v)=\tikz[scale=1.0, line cap=round, baseline=-4pt]{
        \draw[->] (-0.5,0) -- (0.5,0);
        \draw[->] (0,-0.5) -- (0,0.5);
        \node[lgray] at (-.7,0.2){\scriptsize $v$};
        \node[lgray] at (-0.2,-.7){\scriptsize $u$};
        \node[] at (-.7,0){\scriptsize $a$};
        \node[] at (0,-.7){\scriptsize $b$};
        \filldraw[] (0,0) circle (2pt)}=\frac{1}{w_4(u/v)}\left(\begin{matrix}
		w_1(u/v)&0&0&0\\
		0&w_5(u/v)&w_3(u/v)&0\\
		0&w_4(u/v)&w_6(u/v)&0\\
		0&0&0&0
	\end{matrix}\right)_{ab}
\end{equation}
\begin{equation}\label{eq:F-weights}
    \fiveWeightsDotted.
\end{equation}
Because these weights are simply a renormalisation, the two sets of weights admit a Yang--Baxter equation with the same set of $R$ weights in \eqref{eq:five-weights-R-graph},
\begin{equation}\label{eq:YBE}
	\horizontalYangBaxter\,.
\end{equation}

We can now define the following auxiliary quarter-plane partition functions using the renormalised $L'$ vertices, where $U_+=(\chaininf{u}{,})$ and $U_-=(\dots,u_{-1},u_0,\dots,u_{N})$, and $Z=(\chain{z}{k}{,})$, $Z'=(\chain{z'}{k}{,})$\,,
\begin{equation}\label{eq:F}
	F_x(U_+|Z)=\partitionFunctionF{x}
\end{equation}
\begin{equation}
    \widetilde{F}_y(U_-|Z')=\partitionFunctionFTilde{y}
\end{equation}
Note that both functions have $k$ rows and infinitely many columns, where $k$ is the length of $x$ and  $y$. The partition function $F_x$ starts with the column associated with $u_1$ and is right-infinite; particles enter at $x$ and leave along horizontal lines, that is, the path of the first particle is eventually horizontal at height $k$, 
the path of the second particle is eventually horizontal at height $k-1$,and so on. The function $\widetilde{F}_y$ is left-infinite and ends at the column associated with $u_N$; 
particles enter along each horizontal line and leave at positions $y$. At the boundaries, a $0$ denotes an unoccupied edge, and a $1$ denotes an occupied one. As the weights are normalised such that the type-$4$ vertices have weight one, both $F$ and $\widetilde{F}$ have finite and non-vanishing values. 

The boundary conditions denoted $x$ and $y$ are the same as defined for $G$. Due to the Yang--Baxter equation \eqref{eq:YBE}, we can commute the row transfer matrices between $G$ and $F$ under certain assumption on the spectral parameters. 

\begin{lemma}\label{thm:row-exchange}
	Assume there is a $\delta<1$ independent of $i$ such that
	\begin{equation}\label{eq:assumption1}
		\left|\frac{w_1(u_i/z)}{w_4(u_i/z)}\frac{w_4(u_i/v)}{w_1(u_i/v)}\right|<\delta
	\end{equation}
	 for each $i$. Then 
	\begin{equation}\label{eq:row-exchange}
		\rowExchangeOneOne=\frac{f(v/z)}{\alpha^2}\rowExchangeOneTwo
	\end{equation} where the graphs are infinitely extended to the right, and the top and bottom boundaries have  arbitrary occupations. 
\end{lemma}

\begin{proof}[Proof of \autoref{thm:row-exchange}]
    There is only one path in the graph and it exits through the dotted vertices. Hence the only local configurations that occur infinitely often are the type-$1$ undotted $L$ vertices and type-$4$ dotted $L'$ vertices, which are both normalised to have weight 1. Consider the graph on the right hand side of \eqref{eq:row-exchange}, attach the type-$1$ vertex of the $R$ matrix on the left of the graph at a cost of weight $1$ \eqref{eq:five-weights-R-graph}. By repeatedly applying the Yang--Baxter equation \eqref{eq:YBE} we can pull the $R$ matrix to the (infinite) right of the graph, which gives
    \begin{equation}
        \label{eq:row-exchange-proof-1}\rowExchangeTwoOne=\rowExchangeTwoTwo=\rowExchangeTwoThree\,.
    \end{equation}
    Consider the right hand side of \eqref{eq:row-exchange-proof-1}, we sum over possible configurations through the $R$ matrix 
    \begin{equation}\label{eq:row-exchange-proof-2}
        \eqref{eq:row-exchange-proof-1}=\rowExchangeThreeOne+\rowExchangeThreeTwo\,.
    \end{equation}
	Now replace the $R$ matrix of the second graph in \eqref{eq:row-exchange-proof-2} by its weight so that
	\begin{equation}\label{eq:row-exchange-proof-3}
		\rowExchangeFourOne=\,\sqrt{v/z}\,\rowExchangeFourTwo\,.
	\end{equation}
	The resulting graph is infinite to the right, so the right hand side of \eqref{eq:row-exchange-proof-3} must contain the following factor for some $N\gg 1$,
	\begin{equation}
		\rowExchangeFourThree\quad=\prod_{i=N+1}^\infty\frac{w_1(u_i/z)}{w_4(u_i/z)}\frac{w_4(u_i/v)}{w_1(u_i/v)}.
	\end{equation} 
    Under the assumption in \eqref{eq:assumption1} this term is exponentially small, so we have
    \eqref{eq:row-exchange-proof-3}=0, hence
	\begin{equation}
		\eqref{eq:row-exchange-proof-1}=\rowExchangeThreeOne=\frac{\alpha^2}{f(v/z)}\rowExchangeFiveOne
	\end{equation}
\end{proof}

There are Cauchy-type identities between the $F$ ($\FT$) and $G$ partition functions, stated below as \autoref{thm:branching-rule}. The partition function $G_{x/y}$ can be seen as an operator on $\Cbb^{2^N}\!\!\otimes \Cbb^{2^N}$ and the interpretation of these identities is that $F_x,F_y$  are the left and right eigenvectors of $G_{x/y}$ respectively.
\begin{proposition}\label{thm:branching-rule}
	Fix a partition $y=(\chain{y}{k}{,})$. Then under the assumption \eqref{eq:assumption1} we have
	\begin{align}
		\label{eq:branching-rule}\sum_{x} F_x(U_+|Z)G_{x/y}(U_+|V)&=\left[\prod_{j=1}^M \prod_{i=1}^k\frac{f(v_j/z_i)}{\alpha^2}\right] F_y(U_+|Z)\\
        \label{eq:branching-rule-T}\sum_y G_{x/y}(U_-|V)\FT_y(U_-|Z')&=\left[\prod_{j=1}^M\prod_{i=1}^k\frac{f(v_j/z'_i)}{\alpha^2}\right] \FT_x(U_-|Z')
	\end{align} where 
    \begin{equation}\label{eq:G-right-extend}
        G_{x/y}(U|V)=G_{x/y}(U_+|V)=G_{x/y}(U_-V)\,.
    \end{equation}
\end{proposition}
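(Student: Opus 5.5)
The plan is to prove \eqref{eq:branching-rule} by viewing the left-hand side as a single partition function and commuting rows with Lemma~\ref{thm:row-exchange}. Stack the $M$ rows of $G_{x/y}(U_+|V)$ (columns $u_1,u_2,\dots$, bottom boundary $y$) beneath the $k$ rows of $F_x(U_+|Z)$, and sum over the intermediate boundary $x$. This produces one $(M+k)$-row lattice. By the normalisation $w_1=1$, the $G$-rows may be extended indefinitely to the right with empty type-$1$ vertices, and this is the content of \eqref{eq:G-right-extend}. The summation over $x$ is then just the sum over internal edge states along the interface. The bottom boundary is $y$, and the $k$ paths exit horizontally to the right along the $k$ dotted rows.

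Next, move each undotted $G$-row upward past each dotted row, one adjacent pair at a time. Consider a pair consisting of an undotted row with parameter $v_j$ directly below a dotted row with parameter $z_i$. The configuration above and below the pair is held fixed, and the dotted row carries exactly one path that exits to the right. This is precisely the local situation of Lemma~\ref{thm:row-exchange}, read with the pair of rows in the appropriate order. Applying the lemma swaps the two rows at the cost of a factor $f(v_j/z_i)/\alpha^2$. The single-path property holds for every pair we encounter. During the process each dotted row still carries exactly one horizontally exiting path, because the path count entering and leaving is conserved. The undotted rows, on the other hand, are empty at infinity.

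I would order the moves so that the top undotted row $v_M$ is passed through all $k$ dotted rows first, then $v_{M-1}$, and so on. After all $Mk$ swaps, the $k$ dotted rows sit directly above $y$ and the $M$ undotted rows sit on top. The top boundary of the whole lattice is empty, since $F_x$ has no exits at the top. The left boundary is empty and paths cannot move left, so the undotted rows now carry no paths and consist entirely of type-$1$ vertices of weight $1$. What remains is exactly $F_y(U_+|Z)$, multiplied by $\prod_{i,j} f(v_j/z_i)/\alpha^2$.

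The identity \eqref{eq:branching-rule-T} is proved in the mirror-image way. Place the $\widetilde F_y$ rows below $G_{x/y}(U_-|V)$, extended to the left by empty vertices, and sum over $y$. Then commute each undotted row downward through the dotted rows, using the analogue of Lemma~\ref{thm:row-exchange} for left-infinite rows in which the path enters from the left. The proof of that analogue is identical, pulling the $R$ matrix off to the left and discarding an exponentially small term.

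The main obstacle is checking that each swap satisfies the hypotheses of Lemma~\ref{thm:row-exchange}. First, the lemma as stated has the dotted row in a specific position relative to the undotted one, and I must confirm that the Yang--Baxter orientation matches the direction of each swap; otherwise I need the inverse-direction version, with reciprocal factor, and must track which one appears. Second, the estimate \eqref{eq:assumption1} must hold uniformly for all pairs $(v_j,z_i)$, which is what is assumed. Third, the exchange of the infinite sum over $x$ with the limiting argument must be justified, using the same exponential decay as in the lemma.
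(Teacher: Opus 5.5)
Your proposal is correct and matches the paper's proof. You concatenate $G_{x/y}$ below $F_x$ and swap each adjacent undotted/dotted pair of rows using Lemma~\ref{thm:row-exchange}. That lemma's left-hand side is exactly the undotted-below-dotted configuration, so each swap costs $f(v_j/z_i)/\alpha^2$, as you expected. Finally the undotted rows freeze with weight $1$ at the top, leaving $F_y$.

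The order in which you perform the $Mk$ swaps is immaterial. The paper likewise treats the second identity as the mirror-image argument.
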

\begin{proof}
    Here we state the proof of \eqref{eq:branching-rule}; the proof of \eqref{eq:branching-rule-T} is analogous and hence is omitted to avoid repetition. Note that both identities hold under the same condition \eqref{eq:assumption1}.
    
    In \eqref{eq:branching-rule}, the partition function $G$ has an extended grid that is right-infinite, hence the vertical spectral parameters form an infinite set $(\chaininf{u}{,})$. Because of the finite boundary conditions $x/y$, the infinitely extended regions will freeze with type-$1$ vertices of weight $1$, so 
that the first equality in \eqref{eq:G-right-extend} holds.
    
	The left hand side of \eqref{eq:branching-rule} is equal to concatenating the partition functions $F_x$ and $G_{x/y}$ such that the $u$ variables align. Graphically we have, \begin{equation}\label{eq:branching-rule-proof-1}
		\sum_x F_x(U_+|Z)G_{x/y}(U_+|V)=\partitionFunctionFG\quad.
	\end{equation} Consider the row transfer matrices associated with the spectral parameters $z_k$ and $v_1$. Using \autoref{thm:row-exchange}, we commute the two rows and likewise between with all the remaining $v$ variables, 
		\begin{equation}
		\eqref{eq:branching-rule-proof-1} = \frac{f(v_1/z_k)}{\alpha^2}\!\!\!\!\!\!\!\!\partitionFunctionFGOne=\prod_{j=1}^M\frac{f(v_j/z_k)}{\alpha^2}\!\!\!\!\!\!\!\!\partitionFunctionFGTwo
	\end{equation} 
In the same way we can move all the dotted rows to the bottom and get 
\begin{equation}
	\eqref{eq:branching-rule-proof-1} =\prod_{j=1}^M\prod_{i=1}^k\frac{f(v_j/z_i)}{\alpha^2}\partitionFunctionFGThree
\end{equation}
Now the upper part of the partition function freezes with no paths and hence has weight $1$, and the dotted part is exactly $F_y(U_+|Z)$.
\end{proof}

\subsection{Explicit expressions for $F$ and $\FT$}
The partition functions $F$ and $\FT$ are given by explicit determinantal expressions. To prove this we first need the following lemma.
\begin{lemma}\label{thm:recursion}
    We have the following $k$ recursion relations for $i=1,\dots,k$.
    \begin{equation}
    	\label{eq:recursion}F_{\chain{x}{k}{,}}(U_+|Z)\bigg|_{{u_{x_1}=r^2 z_i}}=\left[\prod_{i=2}^k\frac{w_3({r^2z_1}/z_i)}{w_1({r^2z_1}/z_i)}\right]F_{x_1}(U_+|z_i)F_{x_2,\dots,x_k}(U_+|Z\setminus\{z_i\})\,.
    \end{equation}
\end{lemma}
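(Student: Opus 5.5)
The plan is to exploit the zero of $w_1$ at $u/v=r^2$. Specialising $u_{x_1}=r^2z_i$ kills every type-$1$ vertex in column $x_1$ on row $i$, and this forces the configuration of the whole lattice to factorise.

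Since the statement is symmetric under relabelling the $z$'s only up to the choice of row, I would first use the Yang--Baxter equation \eqref{eq:YBE} with two dotted rows to move row $i$ to the top of $F_x$ (the row of $z_1$ in the product). Exchanging two dotted rows with spectral parameters $z_a,z_b$ goes through the $R$ matrix \eqref{eq:five-weights-R-matrix} in the same way as \autoref{thm:row-exchange}. The argument is that attaching the type-$1$ $R$ vertex on the left and pulling it to the infinite right yields the frozen boundary, where both rows are eventually type-$4$ (paths exiting horizontally). The $R$ vertex there must then be the one with both lines occupied, of weight $1$, and the boundary term swaps the exit heights. Hence $F_x(U_+|Z)$ is symmetric in $Z$ up to an explicitly computable factor, or genuinely symmetric, and it suffices to treat the row in which the first path (entering at $x_1$) exits eventually. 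I expect this symmetry step to be the main obstacle: one must verify that the cross term involving $\sqrt{z_a/z_b}$ vanishes or is consistent with the boundary. The alternative is to argue directly with row $i$ kept in place, tracking which path is forced into it.

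With row $i$ in its position, set $u_{x_1}=r^2z_i$. In the column $x_1$, at the row of $z_i$, the type-$1$ vertex has weight $w_1(r^2)=0$. A path enters column $x_1$ from the bottom, and, since $x_1$ is the leftmost entry, nothing arrives from the left on any row below or at row $i$ in that column. The vertex at row $i$ therefore cannot be empty. It must carry the path, either continuing vertically (type-$3$) or turning right (type-$5/6$). Combined with non-intersection and up-right monotonicity, the surviving configurations should be exactly those in which the first path travels straight up column $x_1$ to row $i$ and then turns right, staying on row $i$ forever.

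The rows below $i$ then see this path as a vertical passage at column $x_1$, and to its left everything is empty. The factorisation then follows. Row $i$ restricted to the first path gives $F_{x_1}(U_+|z_i)$. The remaining $k-1$ paths live on the remaining rows and give $F_{x_2,\dots,x_k}(U_+|Z\setminus\{z_i\})$. The only leftover factor comes from the crossings of the first path's vertical segment with the other rows, namely vertices in column $x_1$ with the path going straight up. These are type-$3$ vertices of weight $w_3/w_4$ in $L'$ normalisation, evaluated at $u_{x_1}/z_j=r^2z_i/z_j$. Comparing with the bare $F_{x_2,\dots}$, which has a type-$1$ vertex there of weight $w_1/w_4$, produces precisely the ratio $w_3/w_1$ in \eqref{eq:recursion}.

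Finally I would check the normalisations: $L'$ divides by $w_4$, and the ratio of the factors appearing in $F$ and in the product of smaller $F$'s is $\prod_j w_3(r^2z_i/z_j)/w_1(r^2z_i/z_j)$. If the row-moving step was required, I would also confirm that the exchange factors it produces are trivial on the frozen boundary.
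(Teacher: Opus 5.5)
Your proposal is correct and is essentially the paper's proof. The specialisation kills the empty vertex in column $x_1$ of the top row, which forces the first path straight up and then along that row, so the lattice factorises with the leftover ratio $w_3/w_1$ coming from the vertical crossings in the lower rows. General $z_i$ is reduced to the top row by the $Z$-symmetry of $F_x$, obtained by threading an $R$-matrix through two dotted rows; the paper proves $i=1$ first and invokes symmetry afterwards, so the only difference is the order.

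Your worry about a $\sqrt{z_a/z_b}$ cross term is unfounded. The $R$-matrix conserves the number of paths, and the two rows are empty at the left and both occupied to the right of column $x_k$. Only the weight-$1$ empty and doubly-occupied $R$-vertices can therefore appear, so $F_x$ is genuinely symmetric. You should also drop the ``direct'' alternative, because for a non-top row the first path is not forced to turn there.
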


\begin{proof}
	Consider the left hand side of \eqref{eq:recursion}. We first consider the specialisation {$u_{x_1}=r^2z_1$} and prove the rest by symmetry. Under this {specialisation}, we have $w_1(u_{x_1},z_1)=0$ hence at the position {$x_1$ in the top row} there cannot be an empty vertex, so the path starting as position $x_1$ must go through {this vertex} and graphically we have 
	\begin{equation}\label{eq:recursion-proof-1}
		\recursionOne
	\end{equation} In this proof, grey and black solid lines denote unoccupied and occupied edges respectively, and dotted lines denote regions {of configurations that are summed over}. Now \eqref{eq:recursion-proof-1} can be split into the product of two parts, namely 
	\begin{equation}\label{eq:recursion-proof-2}
		\recursionTwo\,\,\times
	\end{equation}
	The first part in \eqref{eq:recursion-proof-2} is equal to $F_{x_1}(U_+|z_1)$ by definition, and the second part is $F_{x_2,\dots,x_k}(U_+|z_2,\dots,z_k)$ up to the factor $\prod_{i=2}^k\frac{w_3(u_{x_1}/z_i)}{w_1(u_{x_1}/z_i)}$ arising from the left-most occupied vertical line at position $x_1$. 
   
    We are left to prove that $F_x(U_+|Z)$ is symmetric in the variables $Z$. This can be done by applying the Yang--Baxter equation \eqref{eq:YBE} with two rows of vertices of type $L'$, i.e
    \begin{equation}
	\RLPLP
    \end{equation}

Consider $i<k$. One can attach an $R$ matrix at the left, between rows $i$ and $i+1$ at no cost, since the R-matrix weight \eqref{eq:five-weights-R-graph} of type-$1$ is equal to $1$. Then apply the Yang--Baxter equation to pull the $R$ matrix to the right. With the boundary conditions given, on the right this will be an R-matrix vertex of type-$2$, which has also weight $1$, hence one can remove it, thus exchanging $z_i$ and $z_{i+1}$ at no cost. As $F_x(U_+|Z)$ is symmetric in all adjacent $z$ variables, it is symmetric in all $z$ variables. This symmetry implies the rest of the $k-1$ recursion relations for $i=2,\dots,k$.	
\end{proof}
Unlike for $G$, the formul\ae\ for $F$ (and $\FT$) are relatively easier to obtain. In \autoref{thm:formula-F} we give a determinantal formula for $F$ and $\FT$. Both formulae have a Vandermonde factor to cancel the antisymmetry in $z_i$ of the determinant thus making $F$ and $\FT$ symmetric in the $z$ variables. We prove the determinantal of formula $F$ by fixing enough specialisations; the proof of $\FT$ is analogous.

\begin{proposition}\label{thm:formula-F}
	{With $Z=(z_1,\ldots,z_k)$, define the Vandermonde determinant
    \begin{equation}
    \label{eq:vandermonde}
        \Delta (Z) := |z_i^{j-1}| = \prod_{1\le i<j \le k} (z_j-z_i).
    \end{equation}}
    We have 
	\begin{align}\label{eq:formula-F}
		F_{\chain{x}{k}{,}}(U_+|Z)&=\left(-\frac{1-r^2}{\alpha^2}\right)^{\binom{k}{2}}\!\!\!\!{\frac{1}{\Delta(Z)}}\left| z_j^{k-i}\varphi(x_i,z_j)\right|\\
        \label{eq:formula-FT}\FT_{\chain{y}{k}{,}}(U_-|Z)&=\left(\frac{1-r^2}{\alpha^2}\right)^{\binom{k}{2}}\!\!\!\!{\frac{1}{\Delta(Z)}}\left| z_j^{i-1}\varphiT({y_i},z_j)\right|\end{align}
    where 
    \begin{align}
        \varphi(x,z) &=\frac{r\sqrt{z}}{\alpha\sqrt{u_x}}\prod_{i=1}^{x-1}\frac{u_i-r^2z}{u_i(1-r^2)},\\
        \varphiT({y},z) & =\frac{r\sqrt{z}}{\alpha\sqrt{u_{y}}}\prod_{i={y}+1}^{N}\frac{u_i-r^2z}{u_i(1-r^2)} = \frac{1}{\varphi({y},z)} \frac{r^2z}{\alpha^2 u_{y}} \frac{u_{y}(1-r^2)} {u_{y}-r^2z}\prod_{i=1}^{N}\frac{u_i-r^2z}{u_i(1-r^2)}.
    \end{align}
\end{proposition}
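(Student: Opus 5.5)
We prove \eqref{eq:formula-F} by induction on $k$; \eqref{eq:formula-FT} is analogous. Write $\mathrm{RHS}_x(Z)$ for the right-hand side of \eqref{eq:formula-F}. The induction combines the recursion of \autoref{thm:recursion} with a polynomiality/degree count in the variable $u_{x_1}$.

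\emph{Base case $k=1$.} The single path enters the row at $x_1$ from below and then runs horizontally to the right forever. To the left of $x_1$ the row consists of empty vertices, each contributing $w_1(u_i/z)/w_4(u_i/z)$. At column $x_1$ there is a turning vertex of weight $w_6/w_4$ or $w_5/w_4$, which equals $r\sqrt{z}/(\alpha\sqrt{u_{x_1}})$. All vertices to the right are of type $4$ and have weight $1$. Since
\[
\frac{w_1(u/z)}{w_4(u/z)}=\frac{u-r^2z}{u(1-r^2)},
\]
the product of these factors is exactly $\varphi(x_1,z)$, which proves the base case.

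\emph{Inductive step.} Fix $k\ge2$ and regard both sides as functions of $u:=u_{x_1}$, with all other parameters fixed. On the left, $u$ enters only through column $x_1$. Multiplying by $\sqrt{u}$ (to absorb the $\sqrt{u}$ in the $w_5,w_6$ weights) and by a suitable power of $u$ (to clear the $1/w_4$ normalisations), $F$ becomes a polynomial in $u$. A column of $k$ vertices has at most $k-1$ factors of $w_1/w_4$ that are linear in $u$, so this polynomial has degree at most $k-1$. Concretely, the path through column $x_1$ occupies the column up to some row, and all rows above are empty vertices of the form $(u-r^2z_j)/(u(1-r^2))$.

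On the right-hand side, only row $i=1$ of the matrix depends on $u$, through the factor $\sqrt{z_j}/\sqrt{u}$. Rows $i\ge2$ involve $\varphi(x_i,z)$ with $x_i>x_1$, so they contain the factor $(u-r^2z)/(u(1-r^2))$ exactly once. Expanding the determinant along row $1$, the right-hand side is also $u^{-k+1/2}$ times a polynomial of degree at most $k-1$ in $u$.

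A polynomial of degree at most $k-1$ is determined by its values at $k$ points. Comparing the two sides at the $k$ points $u=r^2z_i$ (distinct for generic $Z$) therefore suffices.

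\emph{Checking the specialisations.} By symmetry it is enough to take $u=r^2z_1$; for the left side this symmetry was established in the proof of \autoref{thm:recursion}. At this value, the entries $\varphi(x_i,z_1)$ with $i\ge2$ all vanish, since they contain the factor $(u_{x_1}-r^2z_1)$. So column $j=1$ of the matrix has a single nonzero entry, $z_1^{k-1}\varphi(x_1,z_1)$, in row $1$. Expanding along that column leaves the $(k-1)$-minor with rows $x_2,\dots,x_k$ and variables $z_2,\dots,z_k$. That minor has exponents $z_j^{k-i}$ for $i\ge2$, which match the $(k-1)$ formula; the $\Delta$ of the smaller set is correct by induction.

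The remaining bookkeeping is the following.
\begin{itemize}
\item The Vandermonde ratio gives $\Delta(Z)=\Delta(Z\setminus z_1)\prod_{j\ge2}(z_j-z_1)$.
\item The prefactor ratio is $\bigl(-(1-r^2)/\alpha^2\bigr)^{k-1}$.
\item Each $\varphi(x_i,z_j)$ with $i,j\ge2$ carries an extra factor $(r^2z_1-r^2z_j)/(r^2z_1(1-r^2))$ from $u_{x_1}$ in its product, which the smaller determinant does not contain.
\end{itemize}
Pulling these factors out of each column $j\ge2$ and combining them with $z_1^{k-1}\varphi(x_1,z_1)$ and the Vandermonde ratio, we must verify that the total equals
\[
F_{x_1}(U_+|z_1)\prod_{i\ge2}\frac{w_3(r^2z_1/z_i)}{w_1(r^2z_1/z_i)},
\qquad\text{where}\qquad
\frac{w_3}{w_1}=\frac{\zeta(1-r^2)}{\alpha^2(\zeta-r^2)}\ \text{ with } \zeta=\frac{r^2z_1}{z_i}.
\]
The right-hand side of \autoref{thm:recursion} then matches, completing the induction.

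\emph{Main obstacle.} The hardest step is the degree and normalisation count. The half-integer powers of $u$ and the $1/w_4$ normalisations must produce the same overall prefactor on both sides, so that only $k$ interpolation points are needed. Getting the signs right, in particular the $(-1)^{\binom k2}$, is the other delicate point. If the degree bound fails by one, I would supplement the interpolation with the asymptotic behaviour as $u\to\infty$, obtained by freezing column $x_1$.
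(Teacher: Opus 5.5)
Your argument follows the paper's route. You control both sides as functions of $u_{x_1}$, match them at the $k$ specialisations $u_{x_1}=r^2z_i$ from Lemma~\ref{thm:recursion}, and use symmetry in $Z$. The paper works in $\xi=1/\sqrt{u_{x_1}}$, with degree $2k-1$ and $2k$ points $\xi=\pm(r\sqrt{z_i})^{-1}$. Since both sides are odd in $\xi$, this is the same count as your statement that $u^{k-1/2}F$ has degree at most $k-1$ in $u$. The paper checks the specialisation on the symmetrised sum over $S_k$, where only $\sigma(1)=1$ survives; that is your cofactor expansion along column $1$.

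Your degree bound holds as stated, so the fallback is unnecessary. The leftmost path turns in column $x_1$ at some row $h\ge 1$, leaving at most $k-1$ empty vertices above it. Your sentence that only row $i=1$ depends on $u$ is wrong, since rows $i\ge 2$ depend on it too. Your next sentence accounts for this correctly, however.

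There is one error, in the third bullet of your bookkeeping. The function $F_{x_2,\dots,x_k}(U_+|Z\setminus\{z_1\})$ in Lemma~\ref{thm:recursion} lives on the same lattice, with column $x_1$ present and empty. By the induction hypothesis its determinant has entries $z_j^{k-i}\varphi(x_i,z_j)$ built from the same $\varphi$, which already contains the factor $(u_{x_1}-r^2z_j)/(u_{x_1}(1-r^2))$. So the minor left after expanding along column $1$ is exactly the $(k-1)$-matrix, and nothing should be extracted from it.

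If you do pull $(z_1-z_j)/(z_1(1-r^2))$ out of each column $j\ge 2$, the collected factors give $\alpha^{-2(k-1)}=\prod_{j\ge 2}w_3/w_4$. That is the recursion for a lattice with column $x_1$ deleted, not the $w_3/w_1$ recursion you compare against, so the check does not close as written. Drop that bullet. The remaining factors then give
\[
\left(-\frac{1-r^2}{\alpha^2}\right)^{k-1}\frac{z_1^{k-1}}{\prod_{j=2}^k(z_j-z_1)}
=\prod_{j=2}^k\frac{(1-r^2)z_1}{\alpha^2(z_1-z_j)}
=\prod_{j=2}^k\frac{w_3(r^2z_1/z_j)}{w_1(r^2z_1/z_j)}.
\]
Multiplied by $\varphi(x_1,z_1)=F_{x_1}(U_+|z_1)$, this is exactly the right-hand side of Lemma~\ref{thm:recursion}, which closes the induction.
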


\begin{remark}
    In the homogeneous limit $u_i\to 1$ and up to an overall normalisation the polynomials $F_{x_1,\ldots,x_k}(U|z_1,\ldots,z_k)$ are equal to Grothendieck polynomials $\mathrm{Gr}_{\lambda_1,\ldots,\lambda_k}(w_1,\ldots,w_k;\beta)$ of type A \cite{lascoux1982symmetry,ikeda2013k,ikeda2014proof} in the notation of \cite{motegi2013vertex} with the identification $\beta=-r^2$, $w_i=-\beta z_i -\beta^{-1}$ and $\lambda_{k-i+1}=x_{i}-i$. 
\end{remark}

\begin{proof}
    We prove the proposition for \eqref{eq:formula-F}. The proof for \eqref{eq:formula-FT} is analogous. Define
    \begin{equation}
    \sum_{\sigma\in S_k}\sigma\cdot f(z_1,\ldots,z_k) := \sum_{\sigma\in S_k} f(z_{\sigma(1)},\ldots,z_{\sigma(k)})
    \end{equation}
	so that the permutation only acts on the $z$ variables, then equation \eqref{eq:formula-F} is equal to
		\begin{equation}\label{eq:formula-F-proof-4}
		\sum_{\sigma\in S_k}\sigma\cdot\left(\left[\prod_{1\leq i<j\leq k}\frac{z_i}{\alpha^2}\frac{1-r^2}{z_i-z_j}\right]\left[\prod_{i=1}^k\varphi(x_i,z_i)\right]\right).
	\end{equation}
	To see this, expand the determinant in \eqref{eq:formula-F} and observe that (i) the double product in \eqref{eq:formula-F-proof-4} precisely absorbs the factors $(1-r^2)/\alpha^2$ outside the determinant in  \eqref{eq:formula-F}, (ii) the product $\prod_{1\le i<j\le k}z_i$ in \eqref{eq:formula-F-proof-4} equals $z_i^{k-i}$ and (iii) for all $\sigma\in S_k$\,,

    \begin{equation}
    		\prod_{i<j}\frac{1}{z_{\sigma(i)}-z_{\sigma(j)}}= \sgn(\sigma)\prod_{i<j}\frac{1}{z_i-z_j}.
    	\end{equation}
    
    The claim is that the formula \eqref{eq:formula-F-proof-4} is equal to the partition function \eqref{eq:F}. This is proved by showing that both \eqref{eq:formula-F-proof-4} and \eqref{eq:F} are polynomials of degree $2k-1$ in $\xi=1/\sqrt{u_{x_1}}$ and obey the $2k$ recursions $\xi=\pm 1/r \sqrt{z_i}$ corresponding to \autoref{thm:recursion}.
    
	 {To see that the partition function \eqref{eq:F} is of degree $2k-1$ in $\xi$}, note that the type-$1$ $L'$-vertex in \eqref{eq:F-weights} is of degree $-1$ in $u$, and types-$5,6$ are of degree $-\frac12$, and all other vertices are constant in $u$:
    \begin{equation}
        \Wd=\frac{w_1(u/z)}{w_4(u/z)}=\frac{1}{1-r^2}-\frac{zr^2}{u(1-r^2)}\,,\quad \WWWWWd=\WWWWWWd=\frac{w_5(u/z)}{w_4(u/z)}=\frac{r}{\alpha}\sqrt{\frac{z}{u}}\,.
    \end{equation}
    {Making the change of variable $\xi=1/\sqrt{u_{x_1}}$, the partition function becomes a polynomial in $\xi$ and the vertices of type $1, 5, 6$ have degree $2, 1, 1$ in $\xi$ respectively.} The configuration that gives the highest degree is the one in which the left-most path, which starts at position $x_1$, immediately turns right at the $k$th row, corresponding to vertical spectral parameter $z_k$. This configuration has $k-1$ vertices of type-$1$ and one vertex of type-$5$, giving in total degree $2k-1$ in $\xi$.
	
	It is also straightforward to check that the formula \eqref{eq:formula-F-proof-4} has {degree $2k-1$ because the dependence on $u_{x_1}$ comes solely from $\varphi$.} In the product 
    \begin{equation}
        \prod_{i=1}^k\varphi(x_i, z_i).
    \end{equation} {the factor corresponding to $i=1$ has degree $1$ in $\xi$, and those for $i=2,\dots,k$ each have degree $2$ in $\xi$.} Hence the formula {\eqref{eq:formula-F-proof-4}} also has degree $2k-1$ in $\xi$. 
	
	Now we prove that \eqref{eq:formula-F-proof-4} satisfies the recursions \eqref{eq:recursion}. A given term for $\sigma\in S_k$ in \eqref{eq:formula-F-proof-4} will contain the following product of $\varphi$'s
	\begin{equation}
		\prod_{i=1}^k\left[\frac{r\sqrt{z_{\sigma(i)}}}{\alpha\sqrt{u_{x_i}}}\prod_{j=1}^{x_i-1}\frac{u_j-r^2z_{\sigma(i)}}{u_j(1-r^2)}\right].
	\end{equation} There exists $1\le p \le k$ such that $\sigma(p)=1$ and thus the following {factor} will occur in the above product over $1\le i\le k$,
		\begin{equation}
        \label{eq:phifactorl}
			\frac{r\sqrt{z_1}}{\alpha\sqrt{u_{x_p}}}\prod_{j=1}^{x_p-1}\frac{u_j-r^2z_1}{u_j(1-r^2)}\,.
	\end{equation} Now consider the specialisation $u_{x_1}=r^2 z_1$. If $p>1$, expression \eqref{eq:phifactorl} contains a factor $1-r^2z_1/u_{x_1}$ and thus vanishes. Hence only the term with $\sigma(1)=1$ in \eqref{eq:formula-F-proof-4} is non-zero under the specialisation $u_{x_1}=r^2 z_1$, and we have 
    \begin{equation}\label{eq:formula-F-proof-2}
        \eqref{eq:formula-F-proof-4}\bigg|_{u_{x_1}=r^2z_1} = \varphi(x_1,z_1) \sum_{\sigma\in S'_{k-1}} \prod_{j=2}^k\frac{z_1}{\alpha^2}\frac{1-r^2}{z_1-z_{\sigma(j)}} \prod_{2\leq i<j\leq k}\frac{z_{\sigma(i)}}{\alpha^2}\frac{1-r^2}{z_{\sigma(i)}-z_{\sigma(j)}} \prod_{i=2}^k\varphi(x_i,z_{\sigma(i)}),
    \end{equation}
    where $S'_{k-1}$ denotes permutations of $\{2,\ldots,k\}$. The first product in \eqref{eq:formula-F-proof-2} is symmetric in $(z_2,\dots,z_k)$, hence we have (by noting that $\varphi(x,z)=F_x(U|z)$)
	\begin{align}\label{eq:formula-F-proof-3}
		\eqref{eq:formula-F-proof-4}\bigg|_{u_{x_1}=r^2z_1}&=\prod_{j=2}^k\frac{z_1}{\alpha^2}\frac{1-r^2}{z_1-z_{j}}F_{x_1}(U_+|z_1)\!\times\!\!\!\!\!\sum_{\sigma\in S'_{k-1 }}\!\!\!\!\sigma\!\cdot\!\left(\left[\prod_{2\leq i<j\leq k}\frac{z_i}{\alpha^2}\frac{1-r^2}{z_i-z_j}\right]\left[\prod_{i=2}^k\varphi(x_i,z_{i})\right]\right) \\
        &= \prod_{j=2}^k\frac{w_3(r^2z_1,z_j)}{w_1(r^2z_1,z_j)} F_{x_1}(U_+|z_1) F_{x_2,\dots,x_k}(U_+|z_2,\dots,z_k),
        \end{align} 
	using that when $u_{x_1}=r^2z_1$ we have 
	\begin{equation}
		\prod_{j=2}^k\frac{w_3(u_{x_1}/z_j)}{w_1(u_{x_1}/z_j)}\bigg|_{u_{x_1}=r^2 z_1}=\prod_{j=2}^k\frac{z_1}{\alpha^2}\frac{1-r^2}{z_1-z_{j}}.
	\end{equation}
    
    Hence \eqref{eq:formula-F-proof-4} satisfies the recursion relation for $z_1$ in \eqref{eq:recursion}. Formula \eqref{eq:formula-F-proof-4} is manifestly symmetric in the $z$ variables and we conclude that all $k$ recursions in $u_{x_1}$ in \eqref{eq:recursion} are satisfied. 
    Finally, the $k$ recursions $u_{x_1}=r^2z_i$ give the $2k$ recursions $\xi=\pm\sqrt{z_i/r^2}$, which are enough to completely fix the degree $2k-1$ polynomial. 
\end{proof}

\subsection{Orthogonality}
Because of the specific normalisation, the explicit expressions for $F$ and
$\FT$ in \eqref{eq:formula-F} and \eqref{eq:formula-FT} only depend on $u_1,\ldots,u_N$, and we shall henceforth write $F(U|Z)$ and $\FT(U|Z)$. In this section we show that they are orthogonal with respect to an inner product given by a certain complex contour integral. We will use this orthogonality between $F$ ad $\FT$, and the Cauchy-identity, or eigenvalue equation, between $G$ and $F$ to obtain the determinantal formula \eqref{eq:det-G} of $G$.

\begin{theorem}[Orthogonality]\label{thm:orthogonality}
	Let $x,x'$ be sequences of positions with length $k$, then we have 
	\begin{equation}
		\frac{1}{k!}\oint_\CC\frac{\D z_1}{2\pi \I}\cdots\oint_\CC\frac{\D z_k}{2\pi \I}\, \rho(Z)\, \Delta (Z)^2\, F_x(U|Z)\FT_{x'}(U|Z)= C_k^{-1} \delta_{x,x'}
	\end{equation}
	where {the Vandermonde determinant $\Delta(Z)$ is defined in \eqref{eq:vandermonde}} and
    \begin{align}
         C_k = (-1)^{\binom{k}{2}} \left(\frac{\alpha^2}{r^2-1}\right)^{k^2},\qquad
     \rho(Z) = \left[\prod_{j=1}^kz_j^{-k}\prod_{n=1}^N\frac{u_n(1-r^2)}{u_n-r^2z_j}\right].
    \end{align}
   The contour $\CC$ encircles all points $z_j=u_i/r^2$ and $\delta_{x,x'}$ is the Kronecker delta. 
\end{theorem}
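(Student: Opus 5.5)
The plan is to insert the determinantal formulas \eqref{eq:formula-F} and \eqref{eq:formula-FT} of \autoref{thm:formula-F} and reduce the $k$-fold integral to a single $k\times k$ determinant of one-variable contour integrals. Multiplying, the factor $\Delta(Z)^2$ cancels both Vandermonde denominators. The prefactors combine to $(-1)^{\binom k2}\bigl((1-r^2)/\alpha^2\bigr)^{2\binom k2}$, leaving
\[
\det\bigl(z_j^{k-i}\varphi(x_i,z_j)\bigr)\,\det\bigl(z_j^{l-1}\varphiT(x'_l,z_j)\bigr).
\]
Since $\rho(Z)=\prod_j \rho_1(z_j)$ factorises, with $\rho_1(z)=z^{-k}\prod_{n=1}^N \frac{u_n(1-r^2)}{u_n-r^2z}$, I will apply the Andreief (continuous Cauchy--Binet) identity. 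It turns $\frac1{k!}\oint\cdots\oint$ of the product of the two determinants into $\det(I_{i,l})_{1\le i,l\le k}$, where
\[
I_{i,l}=\oint_\CC\frac{\D z}{2\pi\I}\, z^{l-i-1}\,\varphi(x_i,z)\,\varphiT(x'_l,z)\prod_{n=1}^N\frac{u_n(1-r^2)}{u_n-r^2z}.
\]

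Next I will simplify the integrand using the second expression for $\varphiT$ in \autoref{thm:formula-F}. The full product $\prod_{n=1}^N$ cancels, and
\[
\varphi(x,z)\varphiT(x',z)\prod_{n=1}^N\frac{u_n(1-r^2)}{u_n-r^2z}=\frac{r^2z}{\alpha^2\sqrt{u_xu_{x'}}}\,\frac{\prod_{n=1}^{x-1}(u_n-r^2z)/(u_n(1-r^2))}{\prod_{n=1}^{x'}(u_n-r^2z)/(u_n(1-r^2))}.
\]
If $x_i>x'_l$ the ratio is a polynomial, so the integrand has no poles at the points $u_n/r^2$ and $I_{i,l}=0$; here I take $\CC$ to exclude $z=0$. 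If $x_i\le x'_l$, the integrand is $z^{l-i}\prod_{n=x_i}^{x'_l}\frac{u_n(1-r^2)}{u_n-r^2z}$ up to constants. Its integral equals minus the sum of the residues at $0$ and $\infty$. For $l\ge i$ there is no pole at $0$. The residue at $\infty$ vanishes unless $x'_l-x_i\le l-i$, because the degree in $z$ is $l-i-(x'_l-x_i+1)$. Hence, for $l\ge i$, $I_{i,l}\neq0$ forces $x_i\le x'_l\le x_i+(l-i)\le x_l$, where the last step uses strict monotonicity of $x$. On the diagonal with $x_i=x'_i$, the integrand is $\frac{r^2}{\alpha^2u_{x_i}}\frac{u_{x_i}(1-r^2)}{u_{x_i}-r^2z}$. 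Its residue at $z=u_{x_i}/r^2$ gives $I_{i,i}=(r^2-1)/\alpha^2$, up to the orientation sign.

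Finally I will show that $\det(I_{i,l})$ vanishes for $x\neq x'$ and equals $\prod_i I_{i,i}$ for $x=x'$. Combined with the prefactor, this should produce exactly $C_k^{-1}=(-1)^{\binom k2}\bigl((r^2-1)/\alpha^2\bigr)^{k^2}$; the total power is $2\binom k2+k=k^2$. The approach is to show that for every permutation $\sigma$ the term $\prod_i I_{i,\sigma(i)}$ vanishes unless $\sigma=\mathrm{id}$ and $x=x'$. The support conditions $x_i\le x'_{\sigma(i)}$, together with the bound $x'_{\sigma(i)}\le x_{\sigma(i)}$ whenever $\sigma(i)\ge i$, yield chains of inequalities. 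Following a cycle of $\sigma$ should give a strict contradiction unless all the inequalities are equalities.

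The main obstacle is the entries with $l<i$, where $z^{l-i-1}$ has a genuine pole at $0$ and the degree argument does not apply directly. For these I will first try to show that such entries only occur in combinations already killed by $x_i\le x'_l$ along the cycle. If that fails, I will instead use a triangularity argument: order the rows by $x$ and the columns by $x'$, and show the matrix is block-triangular with a zero diagonal block whenever $x\neq x'$.
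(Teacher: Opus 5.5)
Your proposal is correct. Up to the reduction to a single $k\times k$ determinant it is the paper's argument. The paper also cancels $\Delta(Z)^2$, then expands one determinant and symmetrises, which is Andreief done by hand. It arrives at a matrix $B_{ij}$ that is exactly your $I_{i,l}$ rescaled by $\alpha^2/(r^2-1)$, so that its nonzero diagonal entries equal $1$.

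The paper also relies on exactly your two vanishing rules:
(A) $I_{i,l}=0$ whenever $x'_l<x_i$;
(B) for $l\ge i$, $I_{i,l}=0$ whenever $x'_l>x_l$, via the residue at infinity.

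\textbf{Where you differ: the last step.} The paper shows that $B$ is block upper triangular, with blocks given by the maximal intervals on which $x'_i-x_i$ has a fixed sign. It then checks that each diagonal block is strictly upper triangular, strictly lower triangular, or unitriangular.

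You instead show that every Leibniz term $\prod_i I_{i,\sigma(i)}$ vanishes unless $\sigma=\mathrm{id}$ and $x=x'$. Your cycle argument, which you only sketch, closes in one line.
\begin{itemize}
\item Suppose $\sigma$ has a nontrivial cycle, and let $b$ be its largest element. Then $j=\sigma^{-1}(b)<b$, so (B) applied to the entry $(j,b)$ gives $x'_b\le x_b$.
\item Also $\sigma(b)<b$, so (A) applied to the entry $(b,\sigma(b))$ together with strict monotonicity of $x'$ gives $x_b\le x'_{\sigma(b)}<x'_b$. This is a contradiction.
\item For $\sigma=\mathrm{id}$, (A) and (B) on the diagonal force $x_i=x'_i$ for all $i$.
\end{itemize}
For entries with $l<i$ this argument uses only (A). So the obstacle you flag (the pole at $z=0$ when $l<i$) never arises, beyond your correct observation that the contour excludes $0$. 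Your constant bookkeeping, $(-1)^{\binom k2}\bigl((r^2-1)/\alpha^2\bigr)^{2\binom k2+k}=C_k^{-1}$, is also right.

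\textbf{Comparison of the two routes.} Your permutation argument is shorter and avoids the case analysis over the three block types. The paper's route yields more information, namely the explicit block-triangular structure of $B$, but that extra structure is not needed for the orthogonality statement.
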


\begin{remark}
In \cite{motegi2013vertex} an orthogonality relation is discussed when the $z$-variables are summed over solutions of Bethe equations that arise when periodic boundary conditions are imposed.    
\end{remark}

\begin{proof}
    The {expressions} $C_k,\Delta(Z)$ and $\rho(Z)$ can be absorbed into the functions $F$ and $\FT$ in \eqref{eq:formula-F} and \eqref{eq:formula-FT}. In particular, the square of the Vandermonde and the powers of $-1$ cancel the {reciprocals} of the Vandermonde in the two partition functions. Carefully rearranging gives 
        \begin{equation}\label{eq:orthogonality-integrand}
            C_k\,\rho(Z)\, \Delta (Z)^2\, F_x(U|Z)\FT_{x'}(U|Z)=\left|\frac{r^2}{r^2-1}z_j^{-i+1}\phi(x_i,z_j)\right|\left|z_j^{i-1}\widetilde\phi(x'_i,z_j)\right|,
        \end{equation}
    with 
    \begin{equation}
    \begin{split}
        \phi(x,z)&= \frac{\alpha}{r \sqrt{z}} \varphi(x,z) = \frac{1}{\sqrt{u_x}}\prod_{i=1}^{x-1}\frac{u_i-r^2z}{u_i(1-r^2)},\\
        \widetilde{\phi}(x',z)&=\frac{\alpha}{r \sqrt{z}} \widetilde\varphi(x',z) \prod_{n=1}^N\frac{u_n(1-r^2)}{u_n-r^2z} = \frac{1}{\sqrt{u_{x'}}}\prod_{i=1}^{x'}\frac{u_i(1-r^2)}{u_i-r^2z}.
    \end{split}
    \end{equation}
    {Note that in $\phi$ and $\widetilde\phi$} the ranges of the product differ by $1$, and each {factor} concerning the same index $i$ are {reciprocals} of each other. Now we have
    \begin{align}
        \oint_{\CC^k}\frac{\D\,\!^kZ}{(2\pi \I)^k}\,\eqref{eq:orthogonality-integrand}
        &=\oint_{\CC^k}\frac{\D\,\!^kZ}{(2\pi \I)^k}\,\left|\frac{r^2}{r^2-1}z_j^{-i+1}\phi(x_i,z_j)\right|\sum_{\sigma\in S_k}\sgn(\sigma)\prod_{i=1}^kz_{\sigma(i)}^{i-1}\widetilde\phi(x'_i,z_{\sigma(i)})\nonumber  \\
        &=\sum_{\sigma\in S_k}\oint_{\CC^k}\frac{\D\,\!^kZ}{(2\pi \I)^k}(-1)^{\sigma}\left|\frac{r^2}{r^2-1}z_j^{-i+1}\phi(x_i,z_j)\right|\prod_{i=1}^kz_{\sigma(i)}^{i-1}\widetilde\phi(x'_i,z_{\sigma(i)}) \label{eq:orthogonality-proof-1}  \\
        &={k!}\oint_{\CC^k}\frac{\D\,\!^kZ}{(2\pi \I)^k}\left|\frac{r^2}{r^2-1}z_j^{-i+j}\phi(x_i,z_j)\widetilde\phi(x'_j,z_j)\right| ={k!}\left|B_{ij}\right| \nonumber
    \end{align}
where the last equality is by the multi-linearity of determinant, and $B$ is a $k\times k$ matrix such that
\begin{equation}
    \label{eq:orthogonality-matrix-element-B_ij}B_{ij}=\left\{
    \begin{array}{cl}
        \displaystyle \oint_{\CC}\frac{\D z}{2\pi \I}\frac{r^2}{r^2-1}\frac{z^{-i+j}}{\sqrt{u_{x_i}u_{x'_j}}}\prod_{i=x_i}^{x'_j}\frac{u_i(1-r^2)}{u_i-r^2z} & \quad (x'_j\geq x_i)\\
    0 &\quad  (x'_j<x_i)
    \end{array}
\right..
\end{equation}

\subsubsection{Block triangular structure}
The claim is that the determinant $|B_{i,j}|$ is equal to $\delta_{x,x'}$. The proof below uses a block upper-triangular structure of $B$, such that the determinant of each diagonal block is equal to a Kronecker delta of the subset of coordinates of $x,x'$ that sit inside this block. We begin with a technical lemma.
\begin{lemma}
    Group together consecutive indices $i$ for which the comparison between $x'_i$ and $x_i$ is of the same type (either always $>$, always equal, or always $<$). Let each such group form an interval $I_m$ so that $\{I_m\}$ is a set partition of $[k]$. Then the matrix $B$ is block-upper triangular, with diagonal blocks $B^{(m)}=\{B_{i,j}\}_{i,j\in I_m}$. 
\end{lemma}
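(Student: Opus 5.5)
The plan is to show directly that $B_{i,j}=0$ whenever $i$ lies in a later block than $j$, that is, whenever $j<i$ and $i,j$ belong to different intervals $I_m$. By \eqref{eq:orthogonality-matrix-element-B_ij}, $B_{i,j}$ vanishes as soon as $x'_j<x_i$. So it suffices to prove the purely combinatorial statement that $x'_j<x_i$ for all such pairs $(i,j)$. No property of the contour integral is needed; I only use that $x$ and $x'$ are strictly increasing sequences.

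I will split according to the type of the block containing $j$.

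\emph{Case $x'_j\le x_j$} (the block of $j$ is of type ``$=$'' or ``$<$''). Then $x'_j\le x_j<x_i$, because $j<i$ and $x$ is strictly increasing. Hence $x'_j<x_i$.

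\emph{Case $x'_j>x_j$} (the block of $j$ is of type ``$>$''). Since $i$ lies in a different block from $j$ and the blocks are maximal runs of consecutive indices of the same type, the run starting at $j$ must end before $i$. Therefore there is an index $l$ with $j<l\le i$ and $x'_l\le x_l$. This covers both possibilities: $l=i$ when the block of $i$ has a different type, and an intermediate $l$ when the block of $i$ is again of type ``$>$'' but separated from that of $j$. Using strict monotonicity of both sequences,
\begin{equation*}
x'_j<x'_l\le x_l\le x_i ,
\end{equation*}
so again $x'_j<x_i$.

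In both cases $B_{i,j}=0$. This is exactly the statement that every entry below the diagonal blocks $B^{(m)}=\{B_{i,j}\}_{i,j\in I_m}$ vanishes, so $B$ is block upper-triangular. The only point needing care is the second case, where one must check that the argument also applies when the later block has the same type ``$>$''; the intermediate index $l$ handles this uniformly. Consequently $|B_{i,j}|=\prod_m|B^{(m)}|$, which reduces the remaining analysis to the individual diagonal blocks.
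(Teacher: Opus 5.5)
Your proof is correct and follows essentially the paper's route: reduce to showing $x'_j<x_i$ for $j<i$ in different blocks, then use strict monotonicity of $x,x'$ together with an intermediate index at which the block type changes. Your split on the type of $j$'s block, with a witness $l\in(j,i]$ required only to satisfy $x'_l\le x_l$, is slightly more careful than the paper's version. The paper lumps ``$>$'' and ``$=$'' together as $x'\ge x$ and then asserts $j\notin I_{m-1}$ and $x'_p<x_p$ for some $p\in I_{m-1}$; both claims fail when adjacent blocks have types ``$>$'' and ``$=$'', although the conclusion $B_{i,j}=0$ still holds there.
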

\begin{proof}
    Because $x,x'$ are ordered positions, we must have 
    \begin{subequations}\label{eq:x_i,x_j-comparing-relations}
        \begin{align}
        \label{eq:x_i,x_j-comparing-relations-a}x'_j-x'_i&\geq j-i\\
        \label{eq:x_i,x_j-comparing-relations-b}x_j-x_i&\geq j-i.
        \end{align}
    \end{subequations}
    Let $I_m$ be the index range of a diagonal block. We are to show that for $i\in I_m$, $B_{i,j}=0$ if $j<\min (I_m)$. If the order within the block $I_m$ is $x'_i<x_i$, then we have $x'_j<x'_i<x_i$ and by \eqref{eq:orthogonality-matrix-element-B_ij} $B_{i,j}=0$. The other two remaining cases can be combined. Assume $x'_i\ge x_i$, then $j$ can be belong to an interval with the same order as $I_m$, i.e. $x'_j\ge x_j$, or to an interval with a different order, i.e. $x'_j< x_j$. If the latter, then $x'_j<x_j<x_i$ hence and again $B_{i,j}=0$. If the former, then $j$ cannot be in $I_{m-1}$, so there must be $p\in I_{m-1}$ with $j<p<i$ and $x'_p<x_p$. Hence $x'_j<x'_p<x_p<x_i$ and we still have $B_{i,j}=0$.
\end{proof}

Because $B$ is block upper-triangular, its determinant is the product of determinants of the diagonal blocks. The following proposition states that these determinants are Kronecker delta functions on the coordinates. 
\begin{proposition}
    Let $x^{(m)}=(x_i)_{i\in I_m}$ and similarly for $x'$, then 
    \begin{equation}\label{eq:det(B^(m))-is-kroneker-deelta}
        \left|B^{(m)}\right|=\delta_{x^{(m)},x^{'(m)}}.
    \end{equation}
\end{proposition}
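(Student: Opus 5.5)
We will prove \eqref{eq:det(B^(m))-is-kroneker-deelta} by treating separately the three possible comparison types of the block $I_m$. In each case we show that $B^{(m)}$ is triangular, and then read off the determinant from its diagonal. Everything rests on the explicit entries \eqref{eq:orthogonality-matrix-element-B_ij} and the ordering relations \eqref{eq:x_i,x_j-comparing-relations}.

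\emph{Case $x'_i<x_i$ for all $i\in I_m$.} Here $x^{(m)}\neq x'^{(m)}$, so we must show $|B^{(m)}|=0$. Take $i,j\in I_m$ with $j\le i$. Then $x'_j\le x'_i<x_i$, so $B_{i,j}=0$ by \eqref{eq:orthogonality-matrix-element-B_ij}. Hence $B^{(m)}$ is strictly upper triangular, including a zero diagonal, and its determinant vanishes.

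\emph{Case $x'_i=x_i$ for all $i\in I_m$.} For $j<i$ we have $x'_j<x'_i=x_i$, so $B_{i,j}=0$ and the block is upper triangular. Each diagonal entry is a single-factor integral, obtained by taking $x_i=x'_i=x$:
\[
B_{i,i}=\oint_\CC\frac{\D z}{2\pi\I}\,\frac{r^2}{r^2-1}\,\frac{1}{u_x}\,\frac{u_x(1-r^2)}{u_x-r^2z}=\oint_\CC\frac{\D z}{2\pi\I}\,\frac{1}{z-u_x/r^2}=1 .
\]
Therefore $|B^{(m)}|=1=\delta_{x^{(m)},x'^{(m)}}$.

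\emph{Case $x'_i>x_i$ for all $i\in I_m$.} Again we need $|B^{(m)}|=0$, and this is the case I expect to be the real obstacle, since no entries vanish for combinatorial reasons. The plan is to show $B_{i,j}=0$ for all $j\ge i$ by a residue-at-infinity argument. For $j\ge i$ the integrand $z^{j-i}\prod_{n=x_i}^{x'_j}\frac{u_n(1-r^2)}{u_n-r^2z}$ has no pole at $0$. Its only poles are at $z=u_n/r^2$, all of which lie inside $\CC$. The integral therefore equals minus the residue at infinity. This residue vanishes provided the integrand decays like $z^{-2}$, i.e.\ provided $(x'_j-x_i+1)-(j-i)\ge 2$. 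That inequality follows from $x'_j>x_j$ together with $x_j-x_i\ge j-i$ from \eqref{eq:x_i,x_j-comparing-relations-b}, which give $x'_j-x_i\ge j-i+1$. Consequently $B^{(m)}$ is strictly lower triangular and its determinant vanishes.

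Combining the three cases gives \eqref{eq:det(B^(m))-is-kroneker-deelta}. The main care needed is in the third case: one must check that the contour $\CC$ excludes $z=0$ and encloses all the $u_n/r^2$, and that the degree count is sharp on the diagonal, where exactly $x'_i-x_i+1\ge 2$ factors appear.
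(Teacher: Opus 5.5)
Your proof is correct and follows the paper's argument essentially step for step: triangularity of each block comes from the ordering relations, the $x'_i<x_i$ block has a zero diagonal, the $x'_i=x_i$ block has a unit diagonal, and the residue-at-infinity count $(x'_j-x_i+1)-(j-i)\ge 2$ makes the $x'_i>x_i$ block strictly lower triangular. One correction to your closing remark: excluding $z=0$ from the contour is what makes $B_{i,j}=0$ when $x'_j<x_i$ and $j<i$ (there the integrand is $z^{j-i}$ times a polynomial, with a pole at the origin), which you use in your first two cases, whereas in the third case $j\ge i$ and the integrand has no pole at $0$.
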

\begin{proof}
    We discuss the three cases $x'_i<x_i$, $x'_i>x_i$, and $x'_i=x_i$ separately.
    \begin{itemize}
    \item
    If $x'_i<x_i$ for $i\in I_m$, then the block $B^{(m)}$ is upper triangular with zeros on the diagonal. To see this, note that $x'_j<x_j<x_i$ for $j<i$ so $B_{i,j}=0$, hence the block is upper triangular. On the diagonal we have $x'_i<x_i$ so the diagonal of the block is also zero. It hence follows that \eqref{eq:det(B^(m))-is-kroneker-deelta} is true because in this case we always have $x'^{(m)}\neq x^{(m)}$. \\
    
    \item If $x'_i>x_i$, then the block is lower triangular with zeros on the diagonal. To see this, consider the object 
    \begin{equation}
        \oint_\CC \D z\,z^{n}\prod_{s=x}^{x-1+d}\frac{u_s(1-r^2)}{u_s-r^2z}.
    \end{equation} Observe that the above integral is zero as long as $n\ge0$ and $d-n\ge2$, which can be seen by direct computation of residue at infinity. Note that for $B^{(m)}_{ij}$ with $j>i$, we have $n=j-i>0$ and $d=x'_j-x_i+1$. Hence we have
    \begin{equation}
        d-n=x'_j-x_i+1-(j-i)=(x'_j-x_j)+(x_j-x_i)-(j-i)+1\geq2,
    \end{equation}which is seen by noting $x'_j>x_j$ and \eqref{eq:x_i,x_j-comparing-relations-b}. So $B^{(m)}_{ij}=0$ for $j>i$, namely it is lower triangular. But we also have $B^{(m)}_{ii}=0$ as
    \begin{equation}
        d-n=x'_i-x_i+1\ge2.
    \end{equation} It hence follows that \eqref{eq:det(B^(m))-is-kroneker-deelta} is true.\\
    
    \item If $x'_i=x_i$ for $i\in I_m$, then the block is upper triangular with ones on the diagonal. For $j<i$ we have $x'_j<x'_i=x_i$ so again $B_{i,j}=0$ hence it is upper triangular. On the diagonal we have 
    \begin{equation}
        B_{i,i}=\oint_\CC\frac{\D z}{2\pi \I}\frac{r^2}{r^2-1}\frac{1}{u_{x_i}}\frac{u_{x_i}(1-r^2)}{u_{x_i}-r^2z}=1
    \end{equation} Hence if $x^{(m)}=x'^{(m)}$ then the determinant of $B^{(m)}$ is one. 
    \end{itemize}
\end{proof}
In conclusion, the determinant $|B_{i,j}|$ in \eqref{eq:orthogonality-proof-1} is equal to the determinant of its diagonal blocks $|B^{(m)}|=\delta_{x^{(m)},x'^{(m)}}$, which enforces the whole determinant $|B_{i,j}|$ to be equal to $\delta_{x,x'}$. This proves the orthogonality relation in \autoref{thm:orthogonality}.
\end{proof}

\subsection{Proof of \autoref{thm:determinantal-formula-first-chapter}}
Using the orthogonality between $F$ and $\FT$, we can apply the Cauchy identities in \autoref{thm:branching-rule} to obtain an integration formula for $G_{x/y}$.

\begin{proposition}\label{thm:G-formula-integral}
    Fix {coordinates} $x, y$ and recall the definitions in \autoref{thm:formula-F} and \autoref{thm:orthogonality}. Then 
    \begin{equation}\label{eq:G-formula-integral}
        G_{x/y}(U|V)=\frac{1}{k!}\oint_\CC\frac{\D z_1}{2\pi \I}\cdots\oint_\CC\frac{\D z_k}{2\pi \I}\,C_k\,\rho(Z)\, \Delta (Z)^2\, \FT_x(U|Z)F_y(U|Z)\Lambda(V;Z)
    \end{equation}
    where 
    \begin{equation}
        \Lambda(V;Z)=\prod_{j=1}^M\prod_{i=1}^k\frac{f(v_j/z_i)}{\alpha^2}
    \end{equation}
\end{proposition}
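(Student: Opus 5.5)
The plan is to combine the Cauchy identity \eqref{eq:branching-rule} of \autoref{thm:branching-rule} with the orthogonality of \autoref{thm:orthogonality}, projecting onto a single coordinate $x$.

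\textbf{Step 1.} Start from \eqref{eq:branching-rule} with $y$ fixed:
\begin{equation*}
\sum_{x'} F_{x'}(U|Z)\,G_{x'/y}(U|V)=\Lambda(V;Z)\,F_y(U|Z).
\end{equation*}
Here $F$ depends only on $u_1,\dots,u_N$ in its explicit form. The sum over $x'$ is finite, since $G_{x'/y}$ vanishes unless $x'$ fits in the strip $[N]$. Also, \eqref{eq:G-right-extend} lets us regard $G$ as defined with $U$ alone.

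\textbf{Step 2.} Multiply both sides by $\frac{1}{k!}C_k\,\rho(Z)\Delta(Z)^2\,\FT_x(U|Z)$ and integrate each $z_i$ over $\CC$. On the left, exchange the finite sum with the integrals and apply \autoref{thm:orthogonality} with the roles of $F_{x'}$ and $\FT_x$ as stated. This collapses the sum to $G_{x/y}(U|V)$, and the right-hand side is exactly \eqref{eq:G-formula-integral}.

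\textbf{Main obstacle: analytic continuation.} The Cauchy identity was only proved under assumption \eqref{eq:assumption1}, which ties $z$ to the $u_i$ and $v$ (including the infinitely many auxiliary $u_i$, $i>N$). Meanwhile the orthogonality contour $\CC$ must encircle the points $u_i/r^2$. I would handle this in two parts.

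First, choose the auxiliary parameters $u_i$, $i>N$, so that \eqref{eq:assumption1} holds uniformly. The ratio $w_1/w_4$ of $u_i/z$ tends to $1/(1-r^2)$ as $u_i\to\infty$, so a suitable choice of the tail parameters, relative to $v$ and $z$ in an open region, should satisfy it. Since both sides of the identity, after using the explicit formula of \autoref{thm:formula-F}, are rational in $Z$ and independent of $u_i$ for $i>N$, the identity extends by analytic continuation to all $Z$ away from poles.

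Second, check that the continuation is compatible with the contour. The only poles of the integrand of \eqref{eq:G-formula-integral} inside $\CC$ are at $z=u_n/r^2$, while $\Lambda$ has poles at $z=v_m$ outside $\CC$ and possibly at $z=0$. The extended identity is an equality of rational functions, so integrating it over any fixed contour is legitimate.

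The delicate point is verifying that the rational identity holds identically in $Z$, not just on the region where \eqref{eq:assumption1} holds. I would argue this via the identity theorem, since the region contains an open set.
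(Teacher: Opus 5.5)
Your Steps 1--2 have the same structure as the paper's proof: insert \eqref{eq:branching-rule}, then collapse the sum with \autoref{thm:orthogonality}. The paper itself gives no analytic justification. However, your treatment of the ``main obstacle'' rests on a false claim: the sum over $x'$ in \eqref{eq:branching-rule} is \emph{not} finite.

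That identity lives on the right-infinite lattice. A path of $G_{x'/y}(U_+|V)$ may run horizontally past column $N$ before leaving the $G$-block, with non-zero weight. So $x'_k$ is unbounded, and those terms depend on the tail parameters $u_i$, $i>N$. This is not cosmetic:
\begin{itemize}
\item For $k=M=1$ and $y=(N)$, the only $x'\subseteq[N]$ is $x'=y$. Then $F_y\,G_{y/y}=\frac{f(v/z)}{\alpha^2}F_y$ is impossible, because $G_{y/y}$ does not depend on $z$ while $f(v/z)$ does.
\item With all tail parameters equal to $u$, it is the geometric series over $x'_1>N$, with ratio $\frac{u-r^2z}{u-r^2v}$ (exactly the quantity in \eqref{eq:assumption1}), that produces $f(v/z)$.
\end{itemize}
Hence the truncated identity is false, and the left-hand side is not a rational function of $Z$ independent of the tail. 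Outside the convergence region the series diverges, so there is no rational identity in $Z$ to continue. Your remark about $u_i\to\infty$ does not help either: the ratio in \eqref{eq:assumption1} then tends to $1$.

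The repair is a termwise argument for the infinite series, in three steps.
\begin{enumerate}
\item \textbf{Terms with $x'_k>N$ vanish after integration.} Take $F$ indexed by $x'$ in the matrix \eqref{eq:orthogonality-matrix-element-B_ij}. Since $x\subseteq[N]$, the $k$-th row has integrand $z^{j-k}\phi(x'_k,z)\widetilde\phi(x_j,z)$. Because $x_j\le N<x'_k$, this is a polynomial in $z$ times $z^{j-k}$. Its only possible pole is at $0$, which lies outside $\CC$, so the row is zero. Thus orthogonality still gives $\delta_{x,x'}$ for every $x'$, including those reaching beyond column $N$.
\item \textbf{Justify exchanging the series with the $k$-fold integral.} This needs uniform convergence on $\CC^k$, i.e.\ \eqref{eq:assumption1} for all $z\in\CC$ with a single $\delta$. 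That holds once every $|v_j|$ is large enough, for any fixed bounded tail (e.g.\ $u_i=c$ for $i>N$).
\item \textbf{Remove the restriction on $V$.} Continue analytically in each $v_j$ over the exterior of $\CC$. There $G_{x/y}(U|V)$ is rational in $v_j$, with poles only at the points $u_n/r^2$, which lie inside $\CC$. The right-hand side of \eqref{eq:G-formula-integral} is analytic in $v_j$ off $\CC$.
\end{enumerate}
So the contour stays fixed and you continue in $V$, not in $Z$.
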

\begin{proof}
    We have 
    \begin{align}
        &\frac{1}{k!}\oint_\CC\frac{\D z_1}{2\pi \I}\cdots\oint_\CC\frac{\D z_k}{2\pi \I}\,C_k\,\rho(Z)\, \Delta (Z)^2\,\FT_x(U|Z)F_y(U|Z)\Lambda(V;Z)\nonumber \\
        &\quad=\frac{1}{k!}\oint_\CC\frac{\D z_1}{2\pi \I}\cdots\oint_\CC\frac{\D z_k}{2\pi \I}\,C_k\,\rho(Z)\, \Delta (Z)^2\,\FT_x(U|Z)\sum_wF_w(U|Z)G_{w/y}(U|V)\\
        &\quad=\sum_w\delta_{x,w}G_{w/y}(U|V) =G_{x/y}(U|V),\nonumber 
    \end{align} where the first equality is by \autoref{thm:branching-rule} and the second by \autoref{thm:orthogonality}.
\end{proof}

We are now ready to prove the main result of this paper. Recall \autoref{thm:determinantal-formula-first-chapter}, i.e.\ we have 
    \begin{equation}
        G_{x/y}(U|V)=\det(Q_{i,j}(x_i,y_j))_{1\leq i,j\leq k}
    \end{equation} where 
    \begin{equation}
        Q_{i,j}(x_i,y_j)=\oint_\CC \frac{\D z}{2\pi \I}\, z^{i-j}\frac{r^2\sqrt{u_{y_j}/u_{x_i}}}{r^2z-u_{y_j}}\prod_{n=1}^{x_i}\frac{u_n(r^2-1)}{r^2z-u_n}\prod_{n'=1}^{y_j}\frac{r^2z-u_{n'}}{u_{n'}(r^2-1)}\prod_{m=1}^M\left[\frac{1}{\alpha^2}\frac{1-r^2}{1-v_s/z}\right].
    \end{equation}
    By \eqref{eq:orthogonality-integrand} and \eqref{eq:G-formula-integral} we have 
    \begin{equation}
        G_{x/y}(U|V)=\frac{1}{k!}\oint_\CC\frac{\D z_1}{2\pi \I}\cdots\oint_\CC\frac{\D z_k}{2\pi \I}\,\left|\frac{r^2}{r^2-1}z_j^{-i+1}\phi(y_i,z_j)\right|\left|z_j^{i-1}\widetilde\phi(x_i,z_j)\right|\Lambda(V;Z). 
    \end{equation}
    Expanding the determinants we get 
    \begin{equation}
    \label{eq:GPsiPhintegral}
        G_{x/y}(U|V)=\frac{1}{k!}\sum_{\sigma,\tau}(-1)^{\sigma^{-1}\tau}\oint_\CC\frac{\D z_1}{2\pi \I}\cdots\oint_\CC\frac{\D z_k}{2\pi \I}\,\Psi_\sigma(Z;y)\Phi_\tau(Z;x)\Lambda(V;Z)
    \end{equation}
    in which we used the fact that the signature of permutations satisfies $(-1)^{\sigma+\tau}=(-1)^{\sigma^{-1}\tau}$.
    Here
    \begin{equation}
        \Psi_\sigma(Z;y)=\prod_{i=1}^kz_{\sigma(i)}^{-i+1}\frac{r^2}{r^2-1}\phi(y_i,z_{\sigma(i)})\,,\quad\Phi_\tau(Z;x)=\prod_{i=1}^kz_{\tau(i)}^{i-1}\widetilde\phi(x_i,z_{\tau(i)})\,.
    \end{equation}
    Note that the integral in \eqref{eq:GPsiPhintegral} only depends on $\rho=\sigma^{-1}\tau$ which can be seen by permuting the $z$ variables via $\sigma^{-1}$ and using the symmetry $\Lambda$ and the integration measure. This cancels the factorial out in front and we have 
    \begin{align}
        G_{x/y}(U|V)
        &=\sum_{\rho}(-1)^\rho\oint_\CC\frac{\D z_1}{2\pi \I}\cdots\oint_\CC\frac{\D z_k}{2\pi \I}\,\Psi_{\Id}(Z;\rho(y))\Phi_{\Id}(Z;x)\Lambda(V;Z)\nonumber  \\
        &=\sum_{\rho}(-1)^\rho\prod_{i=1}^k\frac{r^2}{r^2-1}\left[\oint_\CC \frac{\D z_i}{2\pi \I}\, z_i^{i-\rho(i)}\phi(y_{\rho(i)},z_i)\widetilde\phi(x_i,z_i)\prod_{s=1}^M\left[\frac{1}{\alpha^2}\frac{1-r^2}{1-v_s/z_i}\right]\right] \nonumber  \\
        &=\det(Q_{i,j}(x_i,y_j))_{1\leq i,j\leq k}\,.
    \end{align}

\section{Homogeneous limit}
\label{se:homlim}
In this section we take the limit of homogeneous weight $u_i,v_i\rightarrow1,\alpha\rightarrow1$. It will be convenient to use the change of variable 
\begin{equation}
    r^2z = \frac{w+r^2-1}{w}=:A(w),\qquad w=\frac{r^2-1}{r^2z-1}\,.
\end{equation}
Also let $a=r^2-1$.
\begin{corollary}
    In the homogeneous limit $u_i,v_i\rightarrow1, \alpha\rightarrow1$, we have 
    \begin{equation}\label{eq:homogeneous-limit-determinant}
        G_{x/y}^{\rm hom}= \det (B_{i,j})_{1\le i,j \le k}\,,
    \end{equation}
    where
    \begin{equation}
    \label{eq:homlimit}
       B_{i,j} = \oint_{\CC'}\frac{\D w}{2\pi\I w}\lambda(w)^M w^{x_i-y_j}A(w)^{i-j}\,,\qquad 
        \lambda(w) = \frac{w+a}{w-1}.
    \end{equation}
    The contour is defined by $\CC'=\{w_i:|w_i|=c\}$ with $c>\max\{1,|a|\}$. 
\end{corollary}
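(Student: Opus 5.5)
The plan is to specialise the entries $Q_{i,j}$ of \autoref{thm:determinantal-formula-first-chapter} to $u_n=v_m=1$ and $\alpha=1$, and then substitute $w=(r^2-1)/(r^2z-1)$ directly into each contour integral. Since $\det(Q_{i,j})$ is already proved, it suffices to show two things. First, each specialised $Q_{i,j}$ should equal $B_{i,j}$ up to a factor of the form $c_i/c_j$. Second, such a factor cancels in the determinant, since it amounts to conjugating by a diagonal matrix.

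\textbf{Step 1 (specialisation).} At $u_n=v_m=\alpha=1$ the prefactor $\sqrt{u_{y_j}/u_{x_i}}$ becomes $1$. The two $u$-products collapse to $\bigl((r^2-1)/(r^2z-1)\bigr)^{x_i-y_j}$, since factors with index $n\le y_j\le x_i$ cancel. (When $y_j>x_i$ the same expression still holds, as the excess factors simply invert.) The $M$-fold product becomes $\bigl(f(1/z)\bigr)^M$. So the homogeneous entry is
\[
\oint_{\CC}\frac{\D z}{2\pi\I}\, z^{i-j}\,\frac{r^2}{r^2z-1}\,\Bigl(\frac{r^2-1}{r^2z-1}\Bigr)^{x_i-y_j}\Bigl(\frac{1-r^2}{1-1/z}\Bigr)^M .
\]

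\textbf{Step 2 (change of variables).} Set $a=r^2-1$, so that $r^2z=A(w)=(w+a)/w$. I will verify the following identities one at a time:
\begin{itemize}
\item $\D z=-a\,\D w/(r^2w^2)$, and hence $\frac{r^2\,\D z}{r^2z-1}=-\frac{\D w}{w}$;
\item $\bigl((r^2-1)/(r^2z-1)\bigr)^{x_i-y_j}=w^{x_i-y_j}$;
\item $z^{i-j}=r^{-2(i-j)}A(w)^{i-j}$;
\item $z-1=a(1-w)/(r^2w)$, which gives $\frac{1-r^2}{1-1/z}=\frac{-az}{z-1}=\frac{w+a}{w-1}=\lambda(w)$.
\end{itemize}
Combining these, the entry equals $r^{-2i}r^{2j}$ times $\mp\oint \frac{\D w}{2\pi\I w}\lambda(w)^M w^{x_i-y_j}A(w)^{i-j}$. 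The factor $r^{-2i}r^{2j}$ is the diagonal conjugation $D Q D^{-1}$ with $D=\mathrm{diag}(r^{-2i})$, so it drops out of the determinant.

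\textbf{Step 3 (contour).} I expect the bookkeeping here to be the main obstacle, because the sign and the choice of contour must come out right together. In the $z$-plane, $\CC$ is a small circle around $z=1/r^2$, and it excludes the other singular points $z=0$ (from $z^{i-j}$ when $i<j$), $z=1$ (from $\lambda$) and $z=\infty$. Under the Möbius map $w=a/(r^2z-1)$ these points correspond as follows:
\begin{itemize}
\item $z=1/r^2$ goes to $w=\infty$;
\item $z=0$ goes to $w=-a$;
\item $z=1$ goes to $w=1$;
\item $z=\infty$ goes to $w=0$.
\end{itemize}
So a small positively oriented circle about $1/r^2$ maps to a large circle $|w|=c$ with $c>\max\{1,|a|\}$. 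That circle separates $w=\infty$ from $\{0,1,-a\}$, and it is traversed negatively. The orientation reversal supplies a second minus sign, which cancels the one from $\D z$ in Step 2. This leaves exactly $B_{i,j}$ with $\CC'=\{|w|=c\}$ positively oriented. To double-check the sign, I will compare with the diagonal value: computing $B_{i,i}$ by expanding at $w=\infty$ should give $\sum_{l}\binom{M}{l+1}\binom{x_i-y_i-1}{l}r^{2l+2}$, and in particular $B_{i,i}=1$ when $x_i=y_i$ and $M=0$. Finally, the standing assumption $u/v\neq r^2$ becomes $r\neq1$, which is required for $a\neq0$ and hence for the substitution to make sense.
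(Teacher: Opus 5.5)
Your proposal is correct and follows the same route as the paper. The paper specialises the entries of \autoref{thm:determinantal-formula-first-chapter}, substitutes $w=(r^2-1)/(r^2z-1)$ with Jacobian $\D z/\D w=(1-r^2)/(r^2w^2)$, and drops the factor $r^{-2i+2j}$ (from $z^{i-j}=r^{-2(i-j)}A(w)^{i-j}$) from the determinant. Your explicit tracking of the small circle about $z=1/r^2$ onto the negatively oriented circle $|w|=c$, whose sign cancels the one in $r^2\,\D z/(r^2z-1)=-\D w/w$, fills in what the paper compresses into ``taking limits and rearranging''.
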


\begin{proof}
    The Jacobian is 
    \begin{equation}
        \frac{\D z}{\D w}=\frac{1-r^2}{r^2w^2},
    \end{equation} and applying the change of variable, taking limits and rearranging results in \eqref{eq:homlimit} up to a factor of $r^{-2i+2j}$ which factors out of the determinant. 
\end{proof}

\subsection{Explicit evaluation}

\begin{lemma}
If $x_i<y_i$ for some $i$ then $\det B=0$. 
\end{lemma}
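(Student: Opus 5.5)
The plan is to locate a zero block in $B$ that is large enough to force $\det B=0$. Specifically, I want to show that $B_{p,q}=0$ whenever $p\le i\le q$, where $i$ is an index with $x_i<y_i$. Granting this, rows $1,\dots,i$ of $B$ are supported only on columns $1,\dots,i-1$. So these $i$ rows live in an $(i-1)$-dimensional coordinate subspace, they are linearly dependent, and the determinant vanishes.

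\emph{Step 1: inequality on exponents.} Since $x$ and $y$ are strictly increasing, for $p\le i\le q$ we get
\begin{equation*}
x_p\le x_i<y_i\le y_q,
\end{equation*}
hence $x_p-y_q\le -1$.

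\emph{Step 2: residue at infinity.} I evaluate $B_{p,q}$ in \eqref{eq:homlimit} through the residue at infinity. The integrand is
\begin{equation*}
\frac{1}{w}\left(\frac{w+a}{w-1}\right)^M w^{x_p-y_q}\left(\frac{w+a}{w}\right)^{p-q}.
\end{equation*}
Its only finite singularities are at $w=0$, $w=1$ and $w=-a$. All of these lie strictly inside $\CC'$, because $c>\max\{1,|a|\}$. So the integral equals minus the residue at $w=\infty$, i.e.\ the coefficient of $w^{-1}$ in the expansion of the integrand in $|w|>c$.

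\emph{Step 3: expansion at infinity.} In the region $|w|>c$, the factors $\lambda(w)^M=(1+a/w)^M(1-1/w)^{-M}$ and $A(w)^{p-q}=(1+a/w)^{p-q}$ are both power series in $1/w$ with constant term $1$. This holds for either sign of $p-q$: when $p<q$ the factor $(1+a/w)^{p-q}$ is still analytic at infinity, since $|a/w|<1$. Hence the integrand equals $w^{x_p-y_q-1}$ times a power series in $1/w$. By Step 1 the leading exponent satisfies $x_p-y_q-1\le -2$, so no $w^{-1}$ term can occur. Therefore the residue at infinity vanishes and $B_{p,q}=0$.

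The only point needing care is Step 3 in the case $p<q$. There $A^{p-q}$ has a pole at $w=-a$ rather than at $w=0$, and one must confirm that this pole lies inside $\CC'$; that is exactly what the condition $c>|a|$ guarantees. Apart from this check, the argument is a direct residue count and is not a serious obstacle.
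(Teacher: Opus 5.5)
Your proof is correct and follows the paper's own argument. Both use the fact that the residue at infinity of the integrand of $B_{p,q}$ vanishes whenever $x_p<y_q$, which forces the block in rows $1,\dots,i$ and columns $i,\dots,k$ to be zero. Your closing step, that the first $i$ rows lie in an $(i-1)$-dimensional coordinate subspace, is a cleaner statement of the paper's block-triangular conclusion, and your check that the pole at $w=-a$ lies inside the contour fills in a detail the paper leaves implicit.
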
 
\begin{proof}
This corresponds to invalid boundary conditions.
We use the fact that when $x_i<y_j$  the residue of the integrand of $B_{i,j}$ at infinity is zero. Now for any $j\ge i$ and $x_i<y_i$ we have, $x_i<y_i\le y_j$ so that all entries to the right of and including $B_{ii}$ are zero. For $m<i$ we have $x_m<x_i<y_i$ so that all entries $B_{m,i}$ above $B_{ii}$ are also zero. But for these values of $m$ all entries to the right of $B_{m,i}$ are also zero for the same reason because for each $l>i$ we have $x_m<y_i<y_l$. Hence the block from $(1,i)$ to $(i,k)$ is an $i\times(k-i+1)$ zero matrix. The determinant identity of the block matrix \eqref{eq:homogeneous-limit-determinant} is equal to the product of the determinant of two diagonal blocks, and the first $i\times i$ block which has a zero on its diagonal. 
\end{proof}

Now we consider the case when $x_i\ge y_i$ for all $i$. 
The diagonal entries of the determinant in \eqref{eq:homogeneous-limit-determinant} are 
\begin{equation}
    B_{ii}=\oint_{\CC'}\frac{\D w}{2\pi\I w}\lambda(w)^Mw^{x_i-y_i}.
\end{equation}
Let $d_i=x_i-y_i$ be the displacement. If $d_i=0$ the above integral is $1$. If $d_i>0$, the only pole is at $w=1$. Let $t=w-1$, then
\begin{align}
    \nonumber B_{ii}&=[t^{M-1}](t+r^2)^M(t+1)^{d_i-1}\\
    \nonumber &=[t^{M-1}]\sum_{j=0}^M\binom{M}{j}t^jr^{2(M-j)}\sum_{l=0}^{d_i-1}\binom{d_i-1}{l}t^l\\
    &=\sum_{l\ge0}\binom{M}{l+1}\binom{d_i-1}{l}r^{2l+2}.\label{eq:homogeneous-limit-diagonal-element}
\end{align}
This expression is exactly the sum of up-right paths from $y_i$ to $x_i$ with lattice height $M$, weighted $r$ per corner, namely 
\begin{equation}
    G_{(x_i)/(y_i)}
\end{equation} at the homogeneous limit. To see this, note that each term in the sum \eqref{eq:homogeneous-limit-diagonal-element} is a path with $l+1$ left and $l+1$ right turns. The height level of the $l+1$ turns and positions of the first $l$ left turns can be chosen independently, while the last left turn must be at position $x_i$. The first binomial coefficient represents the choice of the $l+1$ out of the $M$ levels of the right turns, and the second the choice of the $l$ positions of the left turns which lie between $y_i+1$ and $x_i-1$.

We have a similar computation for $B_{i,j}$ when $i\ne j$. We need to compute (minus) the residue at infinity
of $\lambda(w)^Mw^{x_i-y_j-1}A(w)^{i-j}=(\frac{w+a}{w-1})^Mw^{x_i-y_j-1}(\frac{w+a}{w})^{i-j}$. 
Let $d=x_i-y_j$. If $d<0$ the residue is zero: the function is analytic at $\infty$. 
If $d\ge0$, let $c=i-j$ and $t=w-1$, so we need the residue at $t=\infty$ of
$$(t+r^2)^{M+c}t^{-M}(t+1)^{d-1-c} = t^{d-1}(1+\frac{r^2}t)^{M+c}(1+\frac1t)^{d-1-c}.$$
This is
\begin{align*}
[t^{-d}](1+\frac{r^2}t)^{M+c}(1+\frac1t)^{d-1-c}&=
[t^{-d}]\sum_{l=0}^{\infty}\binom{M+c}{l}r^{2l}t^{-l}\sum_{j=0}^\infty\binom{d-1-c}{j}t^{-j}\\
&=\sum_{l=0}^d\binom{M+c}{l}\binom{d-1-c}{d-l}r^{2l}.
\end{align*}
(Note that when $c=0$ this also gives (\ref{eq:homogeneous-limit-diagonal-element}) after changing $l$ here to $l+1$ there.)

\old{
Below the diagonal we have $i>j$, and let $d_{i,j}=x_i-y_j$, the only residue is again at $w=1$. We have, for $t=w-1$, 
\begin{align}
    B_{i,j}&=[t^{M-1}](t+r^2)^{M+i-j}(t+1)^{d_{i,j}-1-i+j}\\
    \nonumber&=\sum_{l\ge0}\binom{M+i-j }{M-l-1}\binom{d_{i,j}-1-i+j}{l}r^{2(l+1+i-j)}.
\end{align}

Above the diagonal, we have $i<j$ and the residue at infinity is 
\begin{align}
    B_{i,j}\nonumber&=\Res_{w=0}\left(w^{-d_{i,j}-1}\left(\frac{1+aw}{1-w}\right)^M(1+aw)^{i-j}\right)\\
    \nonumber&=[w^{-1}]\sum_{l,k\ge0}\binom{{i-j+M}}{l}\binom{k+M-1}{M-1}a^lw^{l+k-d_{i,j}-1}\\
    \nonumber&=[w^{d_{i,j}}]\sum_{l,k\ge0}\binom{i-j+M}{l}\binom{k+M-1}{M-1}a^lw^{l+k}\\
    &=\sum_{l\ge0}\binom{i-j+M}{l}\binom{d_{i,j}-l+M-1}{M-1}(r^2-1)^l.
\end{align}
}

\appendix
\section{Connection with the stochastic $R$ matrix}\label{chp:App-A}
As mentioned in the main text, the model we consider is directly related to the stochastic six-vertex model \cite{GuaSphon_92,BorodinCorwinGorin_16,borodin2017family,BorodinPetrov_2017,BorodinWheeler_2022}, via a set of elementary transformations that we now describe.  The key idea behind these transformations is that they preserve the Yang--Baxter equation. Ultimately, we find that the five-vertex model used in the current text arises from the $q=0$ limit of the stochastic six-vertex model, modulo certain gauge factors. 

The weights of the stochastic six-vertex model are as follows:
\begin{equation}\label{eq:weights-stochastic-six}
    \weights{v}{u}{1}{1}{\frac{q(1-v/u)}{1-qv/u}}{\frac{1-v/u}{1-qv/u}}{\frac{1-q}{1-qv/u}}{\frac{(1-q)v/u}{1-qv/u}}{2}{red}\,.
\end{equation}
These weights depend on the ratio $v/u$ of vertical and horizontal spectral parameters that flow through the lattice lines, and also on a global parameter $q$ that is fixed throughout all weights of the lattice. Let us refer to these as type 1--6 vertices, when read from left to right. We denote their associated weights by $w_1,\dots,w_6$. Note that vertices of types $(1,2)$ have the same weight, while the types $(3,4)$ differ only by a factor of $q$, and the types $(5,6)$ only by a factor of $v/u$.

Our first goal is to introduce a further two parameters, $\alpha$ and $\beta$, into the weights of the stochastic six-vertex model (such that the Yang--Baxter equation is preserved). This is done in the following proposition.

\begin{proposition}\label{thm:desym}
    Suppose an $R$ matrix satisfies the Yang--Baxter equation
    \begin{equation}\label{eq:YBE-desym}
        R_{ab}(v/u)R_{ac}(z/u)R_{bc}(z/v)=R_{bc}(z/v)R_{ac}(z/u)R_{ab}(v/u)\,,
    \end{equation}
    where 
    \begin{equation}
        R_{ab}(v/u) = \begin{pmatrix}
            w_1(v/u)&0&0&0\\
            0&w_3(v/u)&w_5(v/u)&0\\
            0&w_6(v/u)&w_4(v/u)&0\\
            0&0&0&w_2(v/u)
        \end{pmatrix}_{ab}
    \end{equation} for some weights $(\chain{w}{6}{,})$.
    Then each of the following three transformations on $R$ preserve the Yang--Baxter equation \eqref{eq:YBE-desym}. 
    \begin{itemize}
        \item Desymmetrise the type-$(3,4)$ vertices: 
        \begin{equation}
            w_3(v/u)\rightarrow\beta w_3(v/u)\,,\quad w_4(v/u)\rightarrow\beta^{-1}w_4(v/u)
        \end{equation} for any constant $\beta$, and the change applies to all three $R$ matrices in \eqref{eq:YBE-desym}.
        \item Desymmetrise the type-$(5,6)$ vertices: 
        \begin{equation}
            w_5(v/u)\rightarrow \frac{g(v)}{g(u)} w_5(v/u)\,,\quad w_6(v/u)\rightarrow \frac{g(u)}{g(v)} w_6(v/u)
        \end{equation} for any function $g$, and the change applies to all three $R$ matrices in \eqref{eq:YBE-desym}.
        \item Replace two of the $R$-matrices in \eqref{eq:YBE-desym} by 
        \begin{align}
        R_{a\star}(v/u) &= \begin{pmatrix}
            \alpha w_1(v/u)&0&0&0\\
            0&w_3(v/u)&w_5(v/u)&0\\
            0&w_6(v/u)&w_4(v/u)&0\\
            0&0&0&\frac{1}{\alpha} w_2(v/u)
        \end{pmatrix}_{a\star},
        \end{align}
        where $\star$ is a placeholder that stands either for the label $b$ or $c$, and the remaining $R$-matrix in \eqref{eq:YBE-desym} by
        \begin{align}
        R_{bc}(v/u) &= \begin{pmatrix}
            w_1(v/u)&0&0&0\\
            0&\alpha w_3(v/u)&w_5(v/u)&0\\
            0&w_6(v/u)&\frac{1}{\alpha} w_4(v/u)&0\\
            0&0&0&w_2(v/u)
        \end{pmatrix}_{bc}.
        \end{align}
        Here $\alpha$ is an arbitrary constant.
    \end{itemize}
\end{proposition}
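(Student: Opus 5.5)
The plan is to verify each of the three transformations directly on the Yang--Baxter equation \eqref{eq:YBE-desym}, written entrywise. Both sides are operators on $V_a\otimes V_b\otimes V_c$. Since every $R$ matrix conserves the number of occupied edges (it is block diagonal with respect to particle number $0,1,2$), each entry of either side is indexed by in-states $(i_a,i_b,i_c)$ and out-states $(j_a,j_b,j_c)$ with equal total occupation. Each entry is a sum over internal edge labels of products of three vertex weights. So it suffices to show that, under each transformation, every product of three weights contributing to a fixed entry is multiplied by a factor that depends only on the external labels, not on the internal configuration. Then both sides of each entry are rescaled by the same factor, and the equation survives.

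\emph{First transformation.} A type-3 vertex has the path going straight through vertically and a type-4 vertex horizontally, or the reverse depending on the convention for which line is which. I would interpret $\beta^{\#3-\#4}$ as a gauge transformation. Assign to each vertex the quantity $\#(\text{occupied vertical input}) - \#(\text{occupied vertical output})$ in the appropriate sense, and show that the combination $\#\text{type-3} - \#\text{type-4}$ over the three vertices in any term equals a conserved flux determined by the boundary labels. Concretely, writing $n_3-n_4$ in terms of occupations along the individual lines $a,b,c$ and observing that internal edges cancel between consecutive vertices gives this flux. I will check this by listing the small cases (total occupation $1$ and $2$; occupations $0$ and $3$ are trivial).

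\emph{Second transformation.} This is the standard diagonal conjugation. Type-5 and type-6 vertices are the turning vertices, where a particle switches from the line carrying $u$ to the line carrying $v$ or back. Assign the weight $g(\text{spectral parameter of the line carrying the particle})$ to each occupied edge. The factor $g(v)/g(u)$ then equals the ratio of outgoing to incoming such weights at that vertex, while non-turning vertices get ratio $1$. The product over a term telescopes to a ratio determined by the external edges and their lines. Equivalently, the new $R_{ab}$ is $D_aD_b R_{ab} (D_aD_b)^{-1}$ with $D=\mathrm{diag}(1,g(\cdot))$ attached to each line's parameter, and conjugations cancel in the YBE product.

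\emph{Third transformation.} This is the most delicate step. Here the factors depend on the role of the vertex: $\alpha$ on empty vertices and $\alpha^{-1}$ on full vertices of the two $R_{a\star}$, but $\alpha^{\pm1}$ on types $3,4$ of $R_{bc}$. I would again seek a conserved count. For $R_{a\star}$ the exponent is $1-(\text{occupation at that vertex's input pair})$. For $R_{bc}$ the exponent is the type-3/type-4 imbalance, as in the first transformation. I expect to show that the total exponent equals $2-\#(\text{particles on line }a\text{ at input}) - \ldots$, a function of boundary data only, by using particle conservation along line $a$, which passes through both $R_{a\star}$ vertices, together with the relation from the first step for line $b,c$. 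The main obstacle is getting the bookkeeping right. If the clean flux argument resists, I will fall back to checking the sectors with total occupation $1$ and $2$ explicitly (the $3\times3$ blocks), which is finite and mechanical.
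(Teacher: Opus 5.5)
Your proposal is correct, and it takes a more structural route than the paper. The paper's proof is simply ``by explicit computation'', an entry-by-entry check of the one- and two-particle blocks, which is only your fallback. Your main argument instead shows that each transformation multiplies every term of a given matrix element, on both sides, by one common factor that depends only on the external occupations $(i_a,i_b,i_c)\to(j_a,j_b,j_c)$.

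This works for all three items. For the first and third, however, the mechanism is not a plain cancellation of internal edges, so you should make the bookkeeping explicit. Read operator products right to left. At a single vertex with slots $1,2$, one has $n_3-n_4=i_2-j_1$, where type $3$ is the diagonal entry with only slot $2$ occupied. Summing over the three vertices of either side, the internal $b$-edge drops out. The internal $a$- and $c$-edges are both adjacent to the middle vertex $R_{ac}$, and they enter only through $k_c-k_a$, which conservation at $R_{ac}$ fixes.

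The result is $\beta^{\,i_c+j_c-i_a-j_a}$ on the left and $\beta^{\,i_b-j_b+2i_c-2j_a}$ on the right. These agree only after invoking global conservation $i_a+i_b+i_c=j_a+j_b+j_c$. So you must actually compare the two sides' factors, not merely observe that each is determined by the boundary.

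For the third item, take exponent $1-i_a-i_\star$ at each $R_{a\star}$ vertex and $n_3-n_4$ at $R_{bc}$. Both sides then acquire $\alpha^{\,2-i_a-i_b-j_a-j_b}$, as you anticipated. The second item is exactly conjugation by $D_a(u)D_b(v)D_c(z)$, as you say.

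Compared with the paper's bare verification, your route explains why the identities hold: the transformed equation is the original one, entrywise, times a boundary factor. It also produces those factors explicitly. The paper's check is shorter to state but gives no such insight.
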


\begin{proof}
By explicit computation.
\end{proof}

\begin{corollary}
Define the matrices
\begin{align}
        L_{a\star}(v/u) &= \begin{pmatrix}
            \alpha w_1(v/u)&0&0&0\\
            0&\beta w_3(v/u)&w_5(v/u) \sqrt{v/u} &0\\
            0&w_6(v/u) \sqrt{u/v}& \frac{1}{\beta} w_4(v/u)&0\\
            0&0&0&\frac{1}{\alpha} w_2(v/u)
        \end{pmatrix}_{a\star}
        \end{align}
        where $\star$ is either $b$ or $c$, and
        \begin{align}
        R_{bc}(v/u) &= \begin{pmatrix}
            w_1(v/u)&0&0&0\\
            0&\alpha \beta w_3(v/u)&w_5(v/u) \sqrt{v/u}&0\\
            0&w_6(v/u) \sqrt{u/v}&\frac{1}{\alpha \beta} w_4(v/u)&0\\
            0&0&0&w_2(v/u)
        \end{pmatrix}_{bc},
        \end{align}
        and where $w_1(v/u),\dots,w_6(v/u)$ are the six weights listed in \eqref{eq:weights-stochastic-six}, read from left to right. Then these matrices satisfy the Yang--Baxter equation
        \begin{equation}\label{eq:YBE-desym2}
        L_{ab}(v/u)L_{ac}(z/u)R_{bc}(z/v)=R_{bc}(z/v)L_{ac}(z/u)L_{ab}(v/u)\,.
    \end{equation}

\end{corollary}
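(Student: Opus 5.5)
The approach is to apply the three transformations of \autoref{thm:desym} one after another to the stochastic six-vertex $R$ matrix in \eqref{eq:weights-stochastic-six}. That matrix is assumed to satisfy \eqref{eq:YBE-desym}, which is the standard Yang--Baxter equation of the stochastic six-vertex model. Since each transformation preserves the Yang--Baxter equation, composing them gives \eqref{eq:YBE-desym2}. The only work is to check that the composite transformation produces exactly the stated $L$ and $R$ matrices, and that the labels $a,b,c$ receive the correct modifications.

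First, apply the type-$(3,4)$ desymmetrisation with parameter $\beta$ to all three matrices. Every matrix then carries $\beta w_3$ and $\beta^{-1}w_4$.

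Second, apply the type-$(5,6)$ desymmetrisation with $g(u)=u^{-1/2}$, which sends $w_5(v/u)\mapsto w_5(v/u)\sqrt{u/v}$. This has the wrong orientation, so I would take $g(u)=\sqrt{u}$ instead, giving $w_5\mapsto w_5\sqrt{v/u}$ and $w_6\mapsto w_6\sqrt{u/v}$ in every matrix. The $g$ must be applied consistently with the spectral parameters carried by each matrix: $(u,v)$ for $R_{ab}$, $(u,z)$ for $R_{ac}$, and $(v,z)$ for $R_{bc}$. The general statement of the transformation guarantees this, since the factors $g(\cdot)$ telescope by conservation of paths through the three vertices.

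Third, apply the $\alpha$-replacement with the two matrices carrying index $a$ playing the role of $R_{a\star}$ and the remaining matrix $R_{bc}$ playing the role of the special one. Because the earlier transformations only rescale entries $w_3,w_4,w_5,w_6$ multiplicatively, they commute with the third transformation. We may therefore regard the third transformation as acting on the weights after the first two, with the $\alpha$ factors multiplying the entries already modified by $\beta$. This yields $\alpha w_1$ and $\alpha^{-1}w_2$ in $L_{a\star}$, and $\alpha\beta w_3$ and $(\alpha\beta)^{-1}w_4$ in $R_{bc}$, exactly as stated.

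The subtle point, and the main obstacle, is that \autoref{thm:desym} is stated for a single function $R$ used in all three positions, whereas after the third step the matrices differ. The clean way around this is to observe that each transformation is verified entrywise by explicit computation. The Yang--Baxter equation for six-vertex-type matrices splits into component equations indexed by the in and out states of $a,b,c$. In each component equation the rescaling factors contributed by the $\beta$ and $g$ transformations appear identically on both sides, because they depend only on the numbers of turns, which are fixed by the boundary states. The $\alpha$ factors likewise balance, since they count empty and doubly occupied $a$-crossings against $bc$-turns. I would check this bookkeeping on the nontrivial components with one or two paths. The components with zero or three paths are trivial, since only type-$1$ and type-$2$ weights appear.
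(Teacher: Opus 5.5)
Your proposal is correct and follows the paper's proof: compose the three transformations of Proposition~\ref{thm:desym}, with $g(u)=\sqrt{u}$ and the two $a$-matrices in the role of $R_{a\star}$. The ``main obstacle'' you raise is not real. After the $\beta$ and $g$ steps you still have one matrix in all three positions satisfying \eqref{eq:YBE-desym}, so the third bullet applies directly and the entrywise paragraph can be dropped. That is just as well, because as phrased that paragraph is inaccurate: the number of turns is not fixed by the boundary states (only the net exponents of $\beta$, $g$ and $\alpha$ are), and the $\alpha^{\pm1}$ in $R_{bc}$ sit on its straight-through vertices, not on turns.
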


\begin{proof}
This uses all of the symmetries described in Proposition \ref{thm:desym} simultaneously, with the particular choice $g(u) = \sqrt{u}$.
\end{proof}

Equation \eqref{eq:YBE-desym2} is the key identity from which we may now easily derive the Yang--Baxter equation \eqref{eq:RLL-alg} used in the present text. To complete this derivation, we first express \eqref{eq:YBE-desym2} in graphical form. We have that
\begin{equation}
\label{eq:YBE-green-yellow}
\tikz[scale=1.0,baseline=-0pt]{
	\begin{scope}[shift={(-2.5,-1)}]
		\draw[->] (-.5,0) -- (1.5,0);
        \draw[->] (0,-.5) -- (0,1) -- (1,2);
        \draw[->] (1,-.5) -- (1,1) -- (0,2);
        \node[below] at (0,-.5) {\scriptsize$v$};
        \node[below] at (1,-.5) {\scriptsize$z$};
        \node[below] at (-.5,0) {\scriptsize$u$};
        \filldraw[yellow] (0,0) circle (2pt);
        \filldraw[yellow] (1,0) circle (2pt);
        \filldraw[green] (.5,1.5) circle (2pt);
	\end{scope}
	\node at (0,0) {$=$};
	\begin{scope}[shift={(2.5,0)},rotate=180]
    		\draw[<-] (-.5,0) -- (1.5,0);
        \draw[<-] (0,-.5) -- (0,1) -- (1,2);
        \draw[<-] (1,-.5) -- (1,1) -- (0,2);
        \node[below] at (0,2) {\scriptsize$z$};
        \node[below] at (1,2) {\scriptsize$v$};
        \node[below] at (1.5,0) {\scriptsize$u$};
        \filldraw[yellow] (0,0) circle (2pt);
        \filldraw[yellow] (1,0) circle (2pt);
        \filldraw[green] (.5,1.5) circle (2pt);
	\end{scope}}
\end{equation}
where the $L$-matrices in \eqref{eq:YBE-desym2} are represented by vertices of the form
\begin{equation}\label{eq:L-yellow}
    \weights{v}{u}{\alpha}{\frac{1}{\alpha}}{\frac{\beta q(1-v/u)}{1-qv/u}}{\frac{1-v/u}{\beta(1-qv/u)}}{\frac{(1-q)\sqrt{v/u}}{1-qv/u}}{\frac{(1-q)\sqrt{v/u}}{1-qv/u}}{2}{yellow}
\end{equation}
while the $R$-matrices in \eqref{eq:YBE-desym2} are represented by vertices of the form
\begin{equation}\label{eq:R-green}
    \weights{z}{v}{1}{1}{\frac{\alpha\beta q(1-z/v)}{1-qz/v}}{\frac{1}{\alpha\beta}\frac{(1-z/v)}{1-qz/v}}{\frac{(1-q)\sqrt{z/v}}{1-qz/v}}{\frac{(1-q)\sqrt{z/v}}{1-qz/v}}{2}{green}\end{equation}
From here, we perform a standard modification of the equation \eqref{eq:YBE-green-yellow}, flipping the orientation of the two vertical lines. Accompanying this flip in orientation, if an edge was occupied by a path in \eqref{eq:YBE-green-yellow}, it now becomes unoccupied (and vice versa). 

Rotate the weights so that the line directions become up-right again. Then the Yang–Baxter equation \eqref{eq:YBE-green-yellow} takes the following form
\begin{equation}
    \horizontalYangBaxterOneTwoThree
\end{equation} where the $L$-matrices are represented graphically as 
\begin{equation}
\weights{u}{v}{\frac{\beta q(1-v/u)}{1-qv/u}}{\frac{1-v/u}{\beta(1-qv/u)}}{\frac{1}{\alpha}}{\alpha}{\frac{(1-q)\sqrt{v/u}}{1-qv/u}}{\frac{(1-q)\sqrt{v/u}}{1-qv/u}}{2}{blue}    
\end{equation}
and the $R$-matrices 
\begin{equation}
    \weightsHollow{z}{v}{1}{1}{\frac{1}{\alpha\beta}\frac{1-z/v}{1-qz/v}}{\frac{\alpha\beta q(1-z/v)}{1-qz/v}}{\frac{(1-q)\sqrt{v/u}}{1-qv/u}}{\frac{(1-q)\sqrt{v/u}}{1-qv/u}}{2}{white}
\end{equation}
where, after flipping and rotating, we have reordered the weight types 1-6 into their conventional order.
Finally, substitute $\beta=\alpha/(q(1-r^2))$ and then take the limit $q\rightarrow0$. After the subsequent change of variables
\begin{equation}
    \frac{u}{v}\mapsto r^{-2}\frac{u}{v},\quad\frac{u}{z}\mapsto r^{-2}\frac{u}{z},
\end{equation}
the resulting $L$ and $R$ matrices coincide with those used in the present text, namely \eqref{eq:five-weights-L-graph} and \eqref{eq:five-weights-R-graph}, respectively, and satisfy the same Yang–Baxter equation \eqref{eq:RLL}.

\begin{remark}
    It is worth noting that, before flipping the arrow directions, the $L$-weights \eqref{eq:L-yellow} and $R$-weights \eqref{eq:R-green} are equivalent to those of the Grothendieck model, up to a normalisation and a change of variables.

    For the $L$-matrix, first normalise all weights by dividing by a factor of $\sqrt{u}$, and then apply the change of variable $\alpha/\sqrt{u} \mapsto u$. Setting $\beta=\alpha$, we may identify $\alpha^{-2}$ here with the parameter $\alpha$ appearing in the L-matrix of \cite{motegi2013vertex}. Namely,
    \begin{equation}
        L_{\rm Gr}=\begin{pmatrix}
            u& 0&0&0&\\
            0&0&1&0\\
            0&1&\alpha u-u^{-1}&0\\
            0&0&0&\alpha u
        \end{pmatrix}. 
    \end{equation}
    
    For the $R$-matrix, the above change of variables is equivalent to $u \mapsto \alpha^{2}/u^{2}$, and hence $u/v \mapsto v^{2}/u^{2}$. After performing this change of variables and multiplying each weight in the R-matrix by $u^{2}/(u^{2}-v^{2})$, setting $\beta=\alpha$, and identifying $\alpha^{-2}$ with the parameter $\alpha$, we recover the Grothendieck $R$-matrix
    \begin{equation}
        R_{\rm Gr}=\begin{pmatrix}
            \frac{u^2}{u^2-v^2}&0&0&0\\
            0&0&\frac{uv}{u^2-v^2}&0\\
            0&\frac{uv}{u^2-v^2}& \alpha&0\\
            0&0&0&\frac{u^2}{u^2-v^2}
        \end{pmatrix}.
    \end{equation}
\end{remark}

\section*{Acknowledgments} We warmly thank Ben Young, and Madeline Nurcombe for discussions. JdG and MW are generously supported by the Australian Research Council. YZ gratefully acknowledges a scholarship from the University of Melbourne. 

\bibliography{references}{}
\bibliographystyle{alphamod}

\end{document}